\documentclass[11pt, oneside]{article} 
\usepackage[document]{ragged2e}
\usepackage{euscript, amsmath, amsthm, amsfonts, calrsfs, wasysym, verbatim, bbm, color, graphics, geometry, authblk, mathrsfs, mathtools, MnSymbol, esint, hyperref}
\usepackage[style=authoryear]{biblatex}
\newcommand{\R}{\mathbb{R}}

\newcommand{\N}{\mathbb{N}}

\newcommand{\prob}{\mathbb{P}}
\newcommand{\T}{\mathbb T}

\newcommand{\E}{\mathbb{E}}
\newcommand{\F}{\mathcal{F}}
\newcommand{\drift}{\boldsymbol{\Theta}}
\newcommand{\kdrift}{\hat{\boldsymbol{\Theta}}}
\newcommand{\kvar}{\boldsymbol{\Sigma}}
\newcommand{\cprice}{\mathfrak p}
\newcommand{\cwealth}{\mathfrak w}
\newcommand{\cdrift}{\boldsymbol{\vartheta}}
\newcommand{\cslow}{\mathfrak s}
\newcommand{\cfast}{\mathfrak f}
\newcommand{\cpricedrift}{x}
\newcommand{\quadraticstate}{\boldsymbol{\mathcal Q}}
\newcommand{\quadraticcoef}{\boldsymbol{Q}}
\newcommand{\linearcoef}{\boldsymbol{b}}
\newcommand{\quadraticshift}{f}
\newcommand{\K}{\mathcal K}
\newcommand{\m}{\mathfrak m}
\newcommand{\const}{\mathfrak c}
\newcommand{\volatilitystate}{\boldsymbol{v}}
\newcommand{\val}{\mathcal V}
\newcommand{\nprice}{\check{P}}
\newcommand{\Cdot}{\boldsymbol{\cdot}}
\newcommand*{\dd}{\mathop{}\!\mathrm{d}}

\newtheorem{theorem}{Theorem}[section]
\newtheorem{corollary}{Corollary}[section]
\newtheorem{lemma}{Lemma}[section]
\newtheorem{proposition}{Proposition}[section]
\theoremstyle{definition}
\newtheorem{definition}{Definition}[section]
\newtheorem{remark}{Remark}[section]

\renewcommand{\abstractname}{}
\renewenvironment{abstract}
 {
  \small
  \quotation
  \flushleft
  {\bfseries\noindent{\large\abstractname}\par\nobreak\smallskip}
 }
 {\endquotation}

\hypersetup{
    colorlinks=true,       
    linkcolor=blue,        
    citecolor=blue,       
    urlcolor=black,         
    filecolor=magenta      
}

\title{\bf Portfolio Optimization under Fast and Slow Latent Mean-Reverting and Momentum Drift}
\author{Dannin J. Eccles {\footnotesize AND} Roger Lee\\\vspace{-3mm}
{\footnotesize Department of Mathematics, University of Chicago, USA}}
\date{}

\begin{document}

\maketitle

%
%

\begin{abstract}
    {\large A}{BSTRACT.} $\;$We consider a class of partial-information portfolio optimization problems in which the drift of a risky asset is driven by two latent stochastic factors evolving at distinct time scales. We show that the filtered estimate of the latent mean-reversion level is driven by the difference between fast and slow exponential moving average (EMA)-type processes of the trailing price history, yielding a Moving Average Convergence Divergence (MACD)-type signal, along with a deterministic Volterra correction. Under logarithmic, power, and exponential utility, we derive candidate optimal strategies in explicit feedback form and establish admissibility and verification results. In particular, the results provide a mathematical foundation for the endogenous emergence of MACD-type trading signals as estimators of latent drift information contained in observed price paths.
\end{abstract}

%
%

\section{Introduction}

We study a portfolio optimization problem under partial information in which the drift of a risky asset is driven by two latent stochastic factors evolving at distinct time scales. The slow factor represents a persistent mean-reversion level, while the fast factor captures short-run deviations such as momentum or sentiment effects. Because the investor observes only prices, trading strategies must be adapted to the filtration generated by the price process.

\medskip
A central contribution of this paper is to show that MACD-type trading signals arise endogenously in a two-factor partial-information portfolio problem. More precisely, the filtered estimate of the latent mean-reversion level is driven by a fast--slow exponential divergence of observed prices, with time-varying Kalman weights and a deterministic finite-horizon Volterra correction. Despite its widespread use in practice, MACD has largely been justified empirically rather than derived as the solution to an explicit optimization problem. We show that in the present two-factor partial-information model, a MACD-type signal arises endogenously in the optimal strategies via the filtered estimate of the latent mean-reversion level.

\medskip
The partial-information portfolio optimization literature has developed along several interrelated strands over the past three decades. Foundational contributions such as \textcite{Lakner1995, Lakner1998} and \textcite{KimOmberg1996} established the modern continuous-time framework in which trading strategies must be adapted to observed prices, deriving explicit solutions in Bayesian settings where the drift is either constant but unknown or follows an observable mean-reverting process. These results were extended to more general utility functions and martingale methods under partial observations in \textcite{KaratzasZhao2001}, clarifying the connection between filtering theory and dynamic portfolio choice.

\medskip
Subsequent work broadened the modeling and solution framework. This includes treatments under incomplete information with general utility classes \parencite{Brendle2006} and semimartingale return dynamics \parencite{BjorkDavisLanden2010}, as well as models in which the hidden state evolves as a finite-state Markov chain, reducing the problem via filtering to a fully observed controlled Markov system \parencite{RiederBauerle2005, FreyGabihWunderlich2012}. More recent contributions have developed asymptotic and perturbative methods to quantify the welfare cost of drift uncertainty \parencite{FouquePapanicolaouSircar2015}, and have applied partial-information techniques to structured models such as cointegration and statistical arbitrage \parencite{LeePapanicolaou2016}.

\medskip
The present paper differs from this literature in that it links the optimal strategy under partial information to a classical technical analysis signal. Prior works characterize optimal strategies through value functions, dual formulations, or Hamilton-Jacobi-Bellman equations, but do not connect these solutions to classical technical analysis signals. An important qualitative precursor is \textcite{Guasoni2006}, where hidden mean-reverting components induce non-Markovian price dynamics for uninformed investors. More recently, \textcite{ChenLee2023} showed that in a single-factor partial-information model the Kalman-Bucy filter reduces to an EMA of prices and that the corresponding optimal strategy is of EMA type.

\medskip
Our contribution extends this program to the two-factor setting. The passage from one to two hidden factors is not merely a technical generalization: it changes the qualitative form of the optimal signal. In the single-factor case the optimal strategy depends on a level signal (an EMA). In the two-factor model, the optimal strategy depends on the difference between fast and slow EMAs, yielding a divergence signal of MACD type. Optimization is carried out over price-adapted controls satisfying standard integrability and martingale admissibility conditions, and the MACD structure emerges endogenously from the filtering and control problem rather than being imposed as a restricted trading rule. For logarithmic, power, and exponential utility, we obtain optimal strategies in explicit feedback form. Closest in spirit to this explicit-strategy perspective is \textcite{LorigZhouZou2019}, who study portfolio problems where strategies are constrained to depend on an EMA of prices. In contrast, we optimize over admissible price-adapted strategies and derive the MACD-type structure endogenously, rather than imposing it as a restriction on the strategy class.

\medskip
The paper is organized as follows. Section 2 introduces the two-factor latent drift model and formulates the portfolio optimization problem under partial information. Section 3 solves the associated filtering problem and shows that the filtered estimate of the latent mean-reversion level admits a MACD-type decomposition in terms of fast and slow exponential filters of the observed price path. Section 4 then studies the resulting control problem, derives the Hamilton-Jacobi-Bellman equation, and constructs explicit candidate value functions and feedback controls for logarithmic, power, and exponential utility. Section 5 establishes admissibility and proves a verification theorem, thereby completing the derivation of the optimal strategies.

%
%

\section{Model Setup}

Let $(\Omega,\F,\mathbb{F},\prob)$ be a filtered probability space satisfying the usual conditions and supporting a two-dimensional standard Brownian motion ${\bf W}=(W_t^F,W_t^S)^{\intercal}_{t\geq0}$ and a one-dimensional Brownian motion $W^P$. 
We assume that $W^P$ has instantaneous correlation vector $\rho:=(\rho_f,\rho_s)^{\intercal}$ with ${\bf W}$, so that
\begin{equation*}
    \dd\langle W^P,{\bf W}\rangle_t=\rho\dd t,
\end{equation*}
where the correlation coefficients satisfy $\|\rho\|_2<1$.

\subsection{Price Dynamics}

We consider a trader investing in a market with dynamics driven by the following constant coefficient linear-Gaussian diffusion model
\begin{equation}\label{model}
    \begin{cases}
        \dd P_t=\left(\lambda_p F_t-\kappa_p P_t\right)\dd t+\sigma_p\dd W_t^P\\[2pt]
        \dd\drift_t=\left(\mu-\kappa\drift_t\right)\dd t+\sigma\dd{\bf W}_t,\\[2pt]
        \drift_0\sim\mathcal{N}\left((f_0,s_0)^{\intercal},\hat{\Sigma}_0\right),\qquad P_0=p_0\in\R.
    \end{cases}
\end{equation}

where $P$ models the price of some underlying asset or spread whose drift depends on two latent diffusion factors operating at distinct time scales, collected in the unobserved drift vector $\drift=(F_t,S_t)^{\intercal}_{t\ge0}$; the vector $\mu=(\mu_f,\mu_s)^{\intercal}$ and the matrices
\begin{equation*}
    \sigma=
    \begin{pmatrix}
        \sigma_f & \sigma_{f,s}\\
        \sigma_{f,s} & \sigma_s
    \end{pmatrix},
    \qquad
    \kappa=
    \begin{pmatrix}
        \kappa_f & -\lambda_f\\
        0 & \kappa_s
    \end{pmatrix}
\end{equation*}
determine the drift and volatility structure of the latent factors.
We assume that $\sigma$ and the initial covariance matrix $\hat{\Sigma}_0$ are positive definite, and that $\drift_0$ is independent of the Brownian motions $(W^P,{\bf W})$.
\medskip
The model parameters satisfy
\begin{equation*}
\begin{cases}
    \kappa_f>\kappa_s,\\
    \kappa_s,\kappa_p\ge0,\\
    \lambda_f,\lambda_p>0.
\end{cases}
\end{equation*}

\begin{remark}
The special case $\lambda_f=0$ is not considered here, since in that case the fast factor is decoupled from the slow factor and the model reduces to the single-factor partial-information dynamics studied in \textcite{ChenLee2023}.
\end{remark}

\medskip
The price process $P$ exhibits mean reversion toward the fast latent factor $F$ when $\kappa_p>0$. If $\kappa_p=0$, the drift of $P$ is driven directly by $F$, resulting in non-stationary momentum-type dynamics. The fast latent factor $F$ is coupled to the slow factor $S$ through the parameter $\lambda_f$. Since $\lambda_f>0$, higher values of the slow factor increase the drift of $F$, thereby transmitting slow-moving information into the mean-reversion level, or drift, of the price process. The slow factor $S$ is mean reverting toward $\mu_s/\kappa_s$ when $\kappa_s>0$ and follows a linear trend with drift $\mu_s$ when $\kappa_s=0$.

\medskip
Since $\kappa_f>\kappa_s$, the model admits a natural separation between fast and slow adjustment speeds, with $F$ operating on a shorter time scale than $S$.

\subsection{The Optimal Portfolio Problem}\label{optimal_portfolio_problem}

We now formulate the associated portfolio optimization problem under partial information. The investor observes only the price process $P$, and trading strategies are therefore required to be adapted to the filtration generated by prices.

\medskip
Fix a deterministic finite horizon $T>0$ and define the price filtration
\begin{equation*}
    \mathbb{F}^P := \{\F^P_t\}_{0 \le t \le T}, 
    \qquad 
    \F^P_t := \sigma(P_s : 0 \le s \le t),
\end{equation*}
augmented in the usual way.

\medskip
A trading strategy is a real-valued, $\mathbb{F}^P$-progressively measurable process 
$\varphi = (\varphi_t)_{0 \le t \le T}$ representing the number of shares held in the risky asset. 
When $\varphi$ is an admissible trading strategy, we write $\varphi \in \mathcal{A}(\mathbb{F}^P)$. The precise admissible class depends on the utility specification and is defined in Section \ref{admissibility_verification_section}. In each case, admissible strategies are required to be $\mathbb F^P$-progressively measurable and to satisfy the integrability and martingale conditions needed for the corresponding verification argument.

\medskip
Given $\varphi \in \mathcal{A}(\mathbb{F}^P)$, the associated wealth process 
$V^\varphi = (V_t^\varphi)_{0 \le t \le T}$ is defined by the self-financing condition
\begin{equation}\label{self_financing}
    \dd V_t^\varphi = \varphi_t \dd P_t, 
    \qquad 
    V_0^\varphi = v_0,
\end{equation}
for some deterministic initial wealth $v_0>0$. Implicit in \eqref{self_financing} are the assumptions of zero interest rates and frictionless markets.

\medskip
The investor seeks to maximize expected utility of terminal wealth:
\begin{equation*}
    \sup_{\varphi \in \mathcal{A}(\mathbb{F}^P)} 
    \E\!\left[ U(V_T^\varphi) \right],
\end{equation*}
where $U$ is a utility function defined on its natural wealth domain. Throughout the paper we consider logarithmic, power, and exponential utilities given by
\begin{align*}
    U_{\log}(v) &:= \log v, \\
    U_{\mathrm{pow}}(v) &:= \frac{v^{1-q}}{1-q}, \qquad q>1, \\
    U_{\exp}(v) &:= -e^{-p v}, \qquad p>0,
\end{align*}
where $q$ and $p$ denote the respective risk-aversion parameters.

%
%

\section{Kalman-Bucy Filtering}

We now solve the optimal filtering problem associated with the partially observed system \eqref{model}. 
Since the dynamics are linear-Gaussian, the conditional distribution of the latent drift vector $\drift_t$ given the price filtration $\F_t^P$ is Gaussian. 
The optimal filter is therefore characterized by its conditional mean
\begin{equation*}
    \kdrift_t=(\hat F_t,\hat S_t)^{\intercal}:=\E[\drift_t\mid\F_t^P]
\end{equation*}
and conditional covariance matrix
\begin{equation*}
    \kvar_t:=\E[(\drift_t-\kdrift_t)(\drift_t-\kdrift_t)^{\intercal}\mid\F_t^P].
\end{equation*}

\subsection{The Kalman-Bucy Equations}

The conditional mean and covariance matrix evolve according to the Kalman–Bucy filtering equations 
\parencite{Xiong2008, KaratzasZhao2001}. 
Before deriving these equations, we first rewrite the system \eqref{model} in the standard linear filtering form.

\medskip
Define the normalized observation process
\begin{equation}\label{normalized_price}
    \nprice_t:=\frac{\lambda_p}{\sigma_p}\int_0^t F_s\dd s+ W_t^P=\frac{1}{\sigma_p}(P_t-p_0)+\frac{\kappa_p}{\sigma_p}\int_0^t P_s\dd s,
\end{equation}
and introduce the processes
\begin{equation}\label{normalized_brownian}
    B_t^F:=\frac{W_t^F-\rho_f W_t^P}{\eta_f},\qquad
    B_t^S:=\frac{W_t^S-\rho_s W_t^P-\eta_{f,s} B_t^F}{\eta_s},
\end{equation}
where
\begin{equation*}
    \eta_f:=\sqrt{1-\rho_f^2},\qquad
    \eta_{f,s}:=-\frac{\rho_f\rho_s}{\sqrt{1-\rho_f^2}},\qquad
    \eta_s:=\sqrt{1-\frac{\rho_s^2}{1-\rho_f^2}}.
\end{equation*}

We also define the matrix
\begin{equation*}
    \eta:=\begin{pmatrix}
        \eta_f & 0\\
        \eta_{f,s} & \eta_s
    \end{pmatrix}.
\end{equation*}

\begin{remark}
    A direct calculation shows that the matrices $\rho$ and $\eta$ satisfy the identity
    \begin{equation}\label{orthogonal_decomposition}
        \rho\rho^{\intercal}+\eta\eta^{\intercal}=I_2,
    \end{equation}
    which reflects the orthogonal decomposition of the signal noise into components correlated with the observation noise $W^P$ and components independent of it.
\end{remark}

\begin{lemma}
Let ${\bf B}:=(B_t^F,B_t^S)^{\intercal}_{t\geq0}$. Then ${\bf B}$ is a two-dimensional standard Brownian motion independent of $W^P$, and the system \eqref{model} can be written in the standard linear filtering form
\begin{equation}\label{standard_filter_form}
    \begin{cases}
        \dd \nprice_t=\frac{\lambda_p}{\sigma_p}F_t\dd t+\dd W_t^P,\\[6pt]
        \dd \drift_t=(\mu-\kappa\drift_t)\dd t+\sigma\rho\,\dd W_t^P+\sigma\eta\,\dd {\bf B}_t,\\[6pt]
        \drift_0\sim\mathcal N((f_0,s_0)^{\intercal},\hat\Sigma_0),
        \qquad\nprice_0=0.
    \end{cases}
\end{equation}
\end{lemma}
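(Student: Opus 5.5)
The plan is to use Lévy's characterization for ${\bf B}$ and then substitute back into \eqref{model}. Note first that $B^F$ and $B^S$ are continuous local martingales with respect to $\mathbb F$, being linear combinations of $W^F,W^S,W^P$, and that they start at zero. It therefore suffices to compute the quadratic covariations of the triple $(W^P,B^F,B^S)$. The given data are $\dd\langle W^P\rangle_t=\dd t$, $\dd\langle W^F,W^P\rangle_t=\rho_f\dd t$, $\dd\langle W^S,W^P\rangle_t=\rho_s\dd t$ and $\dd\langle W^F,W^S\rangle_t=0$, and we should verify that the covariation matrix is $I_3$.

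The computations go as follows.

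\begin{itemize}
\item For $\langle B^F,W^P\rangle$: this equals $(\rho_f-\rho_f)t/\eta_f=0$.
\item For $\langle B^F\rangle$: this equals $(1-2\rho_f^2+\rho_f^2)t/\eta_f^2=t$.
\item For $\langle B^S,W^P\rangle$: this equals $(\rho_s-\rho_s-\eta_{f,s}\cdot 0)t/\eta_s=0$.
\item For $\langle B^S,B^F\rangle$: we get $\eta_s^{-1}\big(\langle W^S,B^F\rangle-\rho_s\langle W^P,B^F\rangle-\eta_{f,s}t\big)$. Here $\langle W^S,B^F\rangle=(0-\rho_f\rho_s)t/\eta_f=\eta_{f,s}t$, so the whole expression vanishes.
\item For $\langle B^S\rangle$: expanding gives $\eta_s^{-2}\big(1+\rho_s^2+\eta_{f,s}^2-2\rho_s^2-2\eta_{f,s}^2+0\big)t=\eta_s^{-2}(1-\rho_s^2-\eta_{f,s}^2)t$. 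Since $\rho_s^2+\eta_{f,s}^2=\rho_s^2/(1-\rho_f^2)$, this equals $t$.
\end{itemize}

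Here $\|\rho\|_2<1$ guarantees $\eta_f>0$ and $\eta_s>0$, because $\rho_s^2<1-\rho_f^2$. Lévy's theorem then shows that $(W^P,B^F,B^S)$ is a three-dimensional standard $\mathbb F$-Brownian motion. Its components are jointly Gaussian with zero cross-covariance, so ${\bf B}$ is a standard two-dimensional Brownian motion independent of $W^P$.

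To obtain the filtering form, invert \eqref{normalized_brownian}. This gives $W^F=\rho_fW^P+\eta_fB^F$ and $W^S=\rho_sW^P+\eta_{f,s}B^F+\eta_sB^S$, i.e.\ ${\bf W}=\rho W^P+\eta{\bf B}$ in vector form. Substituting this into the $\drift$ equation of \eqref{model} yields $\sigma\dd{\bf W}_t=\sigma\rho\,\dd W^P_t+\sigma\eta\,\dd{\bf B}_t$. The observation equation follows directly from \eqref{normalized_price}. The second expression there is obtained by integrating the $P$-dynamics, $P_t-p_0=\int_0^t(\lambda_pF_s-\kappa_pP_s)\dd s+\sigma_pW^P_t$, and rearranging. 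Differentiating gives $\dd\nprice_t=\frac{\lambda_p}{\sigma_p}F_t\dd t+\dd W^P_t$, and $\nprice_0=0$.

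Independence of $\drift_0$ from $(W^P,{\bf B})$ is inherited from the assumption that $\drift_0$ is independent of $(W^P,{\bf W})$, since ${\bf B}$ is a functional of these Brownian motions. The main point needing care is the $\langle B^S\rangle$ computation, together with the positivity of $\eta_s$. The identity \eqref{orthogonal_decomposition} could be used as a shortcut: the covariation matrix of ${\bf B}$ is $\eta^{-1}(I-\rho\rho^{\intercal})\eta^{-\intercal}$, and by \eqref{orthogonal_decomposition} this equals $I_2$.
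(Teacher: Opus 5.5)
Your proposal is correct and follows the same route as the paper. You verify via the covariation computation that $(W^P,B^F,B^S)$ satisfies Lévy's characterization, then invert \eqref{normalized_brownian} to get ${\bf W}=\rho W^P+\eta{\bf B}$ and substitute into \eqref{model}; your explicit covariation calculations are a more detailed version of the paper's ``direct computation.'' The paper's proof also observes that $\nprice$ and $P$ generate the same augmented filtration (since $P$ is recovered from $\nprice$ by solving a linear ODE), which you omit. It is not needed for the lemma as stated, but it is what makes filtering on $\nprice$ equivalent to filtering on $P$ later on.
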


\begin{proof}
    Since $\rho_f^2+\rho_s^2<1$, the coefficients $\eta_f$ and $\eta_s$ are nonzero, so $(W^P,B^F,B^S)$ is well defined. Moreover, $(W^P,B^F,B^S)$ is a continuous Gaussian martingale starting from zero. A direct computation shows that its quadratic covariation matrix is the identity. Hence, by Lévy's characterization theorem, $(W^P,B^F,B^S)$ is a three-dimensional standard Brownian motion, and the first claim follows.

    \medskip
    By \eqref{normalized_price}, the augmented filtration generated by the normalized observation process $\nprice$ coincides with the price filtration $\mathbb F^P$. Substituting the relations \eqref{normalized_brownian} into \eqref{model} and using the definition of $\nprice_t$ yields \eqref{standard_filter_form}.
\end{proof}

\medskip
The innovation process $\nu_t$ associated with the transformed system \eqref{standard_filter_form} is defined by
\begin{equation*}
    \nu_t:=\nprice_t-\frac{\lambda_p}{\sigma_p}\int_0^t \hat{F}_s\dd s,
\end{equation*}
where $\hat{F}_t:=\kdrift^{(1)}_t$, and $(\nu_t)_{t\ge0}$ is an $\F_t^P$-Brownian motion.

\begin{corollary}
    The conditional mean $\kdrift$ and covariance matrix $\kvar$ satisfy the Kalman-Bucy filtering equations associated with the system \eqref{standard_filter_form}:
    \begin{equation}\label{kalman_bucy_equations}
    \begin{cases}
        \dd \kdrift_t=(\mu-\kappa\kdrift_t)\dd t+\left(\sigma\rho+\frac{\lambda_p}{\sigma_p}\kvar_t
        e_1\right)\dd \nu_t\\[6pt]
        \frac{\dd}{\dd t}\kvar_t=-\kvar_t\kappa^{\intercal}-\kappa\kvar_t+a-\left(\sigma\rho+\frac{\lambda_p}{\sigma_p}\kvar_te_1\right)\left(\sigma\rho+\frac{\lambda_p}{\sigma_p}\kvar_te_1\right)^{\intercal}\\[6pt]
        \kdrift_0=(f_0,s_0)^{\intercal}\in\R^2,\qquad\kvar_0=\hat{\Sigma}_0,
    \end{cases}
    \end{equation}
    where
    \begin{equation*}
        e_1:=(1,0)^{\intercal},\qquad a:=(\sigma\rho)(\sigma\rho)^{\intercal}+(\sigma\eta)(\sigma\eta)^{\intercal}=\sigma\sigma^{\intercal}.
    \end{equation*}
\end{corollary}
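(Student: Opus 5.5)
The plan is to obtain the corollary by applying the standard Kalman--Bucy theorem for linear-Gaussian systems with correlated signal and observation noise (as in \textcite{Xiong2008}) to the system \eqref{standard_filter_form} from the preceding lemma. The proof has three steps.

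\textbf{Step 1: match the general form.} Write the general setting as
\begin{equation*}
\dd X_t=(A_0+A_1X_t)\dd t+b_1\dd W_t+b_2\dd B_t,\qquad \dd Y_t=(H_0+H_1X_t)\dd t+\dd W_t,
\end{equation*}
with $W$ and $B$ independent Brownian motions. Its filter is
\begin{equation*}
\dd \hat X_t=(A_0+A_1\hat X_t)\dd t+(b_1+\Sigma_tH_1^{\intercal})\dd\nu_t,
\end{equation*}
with innovation $\nu_t=Y_t-\int_0^t(H_0+H_1\hat X_s)\dd s$. The covariance satisfies the Riccati equation
\begin{equation*}
\dot\Sigma_t=A_1\Sigma_t+\Sigma_tA_1^{\intercal}+b_1b_1^{\intercal}+b_2b_2^{\intercal}-(b_1+\Sigma_tH_1^{\intercal})(b_1+\Sigma_tH_1^{\intercal})^{\intercal}.
\end{equation*}
For our system I identify $X=\drift$, $Y=\nprice$, $W=W^P$, $B={\bf B}$, together with
\begin{equation*}
A_0=\mu,\quad A_1=-\kappa,\quad b_1=\sigma\rho,\quad b_2=\sigma\eta,\quad H_0=0,\quad H_1=\tfrac{\lambda_p}{\sigma_p}e_1^{\intercal}.
\end{equation*}
The lemma supplies the required independence of ${\bf B}$ and $W^P$. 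It also gives that $\nprice$ generates $\mathbb F^P$, so conditioning on $\F^{\nprice}_t$ is the same as conditioning on $\F^P_t$.

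\textbf{Step 2: substitute.} Since $\Sigma_tH_1^{\intercal}=\frac{\lambda_p}{\sigma_p}\kvar_te_1$, the gain becomes $\sigma\rho+\frac{\lambda_p}{\sigma_p}\kvar_te_1$. This reproduces both the gain and the quadratic correction term in \eqref{kalman_bucy_equations}. The drift terms $-\kappa\kvar_t-\kvar_t\kappa^{\intercal}$ come directly from $A_1=-\kappa$.

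\textbf{Step 3: simplify $a$.} Using \eqref{orthogonal_decomposition},
\begin{equation*}
(\sigma\rho)(\sigma\rho)^{\intercal}+(\sigma\eta)(\sigma\eta)^{\intercal}=\sigma(\rho\rho^{\intercal}+\eta\eta^{\intercal})\sigma^{\intercal}=\sigma\sigma^{\intercal}.
\end{equation*}
The initial conditions $\kdrift_0=(f_0,s_0)^{\intercal}$ and $\kvar_0=\hat\Sigma_0$ follow because $\drift_0$ is Gaussian and independent of the noise, and $\F^P_0$ is trivial. Gaussianity of the conditional law, as noted at the start of the section, shows that $\kvar_t$ is deterministic, so it really solves an ODE.

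\textbf{Main obstacle.} The only delicate point is justifying the correlated-noise form of the theorem, namely the gain term $b_1$. If a cited version assumes independent noises, I would reduce to it as follows. Write $\dd W^P=\dd\nprice-\frac{\lambda_p}{\sigma_p}F_t\dd t$ and substitute into the signal equation, which becomes
\begin{equation*}
\dd\drift_t=\bigl(\mu-\kappa\drift_t-\tfrac{\lambda_p}{\sigma_p}\sigma\rho e_1^{\intercal}\drift_t\bigr)\dd t+\sigma\rho\,\dd\nprice_t+\sigma\eta\,\dd{\bf B}_t.
\end{equation*}
Now the signal noise $\sigma\eta\,{\bf B}$ is independent of the observation noise. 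The term $\sigma\rho\,\dd\nprice_t$ is an observed forcing, which the standard filter handles. Applying the independent-noise filter and re-expanding $\dd\nprice_t=\dd\nu_t+\frac{\lambda_p}{\sigma_p}\hat F_t\dd t$ recovers \eqref{kalman_bucy_equations} after collecting terms. The one check needed is that the Riccati cross terms $-\frac{\lambda_p}{\sigma_p}(\sigma\rho e_1^{\intercal}\kvar_t+\kvar_te_1\rho^{\intercal}\sigma^{\intercal})$ combine with $(\sigma\eta)(\sigma\eta)^{\intercal}=\sigma\sigma^{\intercal}-(\sigma\rho)(\sigma\rho)^{\intercal}$ to give the completed square shown in the statement.
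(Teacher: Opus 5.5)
Your proposal is correct and follows the paper's route. The paper gives no separate proof: it obtains the corollary by invoking the standard correlated-noise Kalman--Bucy theorem (citing Xiong and Karatzas--Zhao) for the standard form \eqref{standard_filter_form}, which is exactly your identification $A_1=-\kappa$, $b_1=\sigma\rho$, $b_2=\sigma\eta$, $H_1=\tfrac{\lambda_p}{\sigma_p}e_1^{\intercal}$, followed by the simplification of $a$ via \eqref{orthogonal_decomposition}. Your fallback reduction to independent noise through $\dd W^P_t=\dd\nprice_t-\tfrac{\lambda_p}{\sigma_p}F_t\dd t$ also checks out: the cross terms $\tfrac{\lambda_p}{\sigma_p}(\sigma\rho e_1^{\intercal}\kvar_t+\kvar_te_1\rho^{\intercal}\sigma^{\intercal})$, together with $(\sigma\eta)(\sigma\eta)^{\intercal}=\sigma\sigma^{\intercal}-(\sigma\rho)(\sigma\rho)^{\intercal}$, complete the square exactly as in the stated Riccati equation.
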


\subsection{MACD-Type Representation of the Filtered Mean-Reversion Level}

The quantity of primary interest for trading decisions is the filtered estimate of the unobserved drift component of the price process. From the price dynamics \eqref{model}, the observable price drift is $\lambda_p F_t-\kappa_p P_t$. Since $P_t$ is observed, the only unobserved component of this drift is the fast latent factor $F_t$. Its filtered estimate is
\begin{equation*}
    \hat F_t = \E[F_t \mid \F_t^P].
\end{equation*}

\medskip
The main result of this section is that $\hat F_t$ admits a representation in terms of a deterministic component, a fast--slow exponential divergence of the observed price path, and a deterministic Volterra correction term driven by the same exponential filters. Accordingly, the finite-horizon filter admits a continuous-time MACD-type representation. Theorem \ref{macd_representation} makes this statement precise.

\medskip
For notational convenience, define the filtered drift volatility
\begin{equation}\label{drift_volatility}
    K_t:=\sigma\rho+\frac{\lambda_p}{\sigma_p}\kvar_t e_1.
\end{equation}
The matrix exponential
\begin{equation*}
        e^{-\kappa t}=\begin{pmatrix}
            e^{-\kappa_f t} & \Delta_{f,s}(t)\\
            0 & e^{-\kappa_s t}
        \end{pmatrix},\qquad
        \Delta_{f,s}(t):=\frac{\lambda_f\left(e^{-\kappa_f t}-e^{-\kappa_s t}\right)}{\kappa_s-\kappa_f},
    \end{equation*}
of the model drift structure $\kappa$ plays a crucial role in Theorem \ref{macd_representation}, particularly its first row vector, and for this reason we define
\begin{equation*}
    \Phi(t):=\begin{pmatrix}
        e^{-\kappa_f t}\\\Delta_{f,s}(t)
    \end{pmatrix}.
\end{equation*}
\begin{theorem}\label{macd_representation}
    The filtered estimate $\hat{F}_t$ of the mean-reverting component of the price process can be expressed as
    \begin{equation*}
        \hat F_t=H(t)+\mathcal{E}_t(\kappa_f,\alpha)-\mathcal{E}_t(\kappa_s,\beta)+\mathfrak B_t(\mathcal{E}_{\Cdot}(\kappa_f,\alpha),\mathcal{E}_{\Cdot}(\kappa_s,\beta)),
    \end{equation*}
    where
    \begin{equation*}
        H(t):=\Phi(t)\Cdot\kdrift_0+\int_0^t\Phi(t-u)\Cdot\mu\dd u,
    \end{equation*} 
    is the deterministic contribution arising from the initial filtered state and the deterministic drift term of the latent factors;
    \begin{equation*}
        \mathcal{E}_t(\gamma,\delta):=\frac{\delta_t}{\sigma_p}P_t-\frac{e^{-\gamma t}\delta_0}{\sigma_p}P_0+\frac{1}{\sigma_p}\int_0^t e^{-\gamma(t-u)}\bigl((\kappa_p-\gamma)\delta_u-\dot\delta_u\bigr)P_u\dd u,
    \end{equation*}
    is an EMA-type price-level filter with decay rate $\gamma$ and time-varying weight $\delta$;
    \begin{equation*}
        \alpha_u:=K_u^{(1)}+\frac{\lambda_f}{\kappa_s-\kappa_f}K_u^{(2)},\quad\beta_u:=\frac{\lambda_f}{\kappa_s-\kappa_f}K_u^{(2)},
    \end{equation*}
    are $C^1$ linear functionals of the drift volatility \eqref{drift_volatility}; and $\mathfrak{B}:C([0,T],\R^2)\to C^1([0,T],\R)$ is a deterministic operator defined by
    \begin{equation*}
        \mathfrak B_t(f,g):=-\frac{\lambda_p}{\sigma_p}e^{-\kappa_f t}Y_f(t;f,g)+\frac{\lambda_p}{\sigma_p}e^{-\kappa_s t}Y_s(t;f,g),
    \end{equation*}
    where $\mathbf Y(\;\Cdot\;;f,g)=(Y_f(\;\Cdot\;;f,g),Y_s(\;\Cdot\;;f,g))^{\intercal}$ is the unique solution of the auxiliary linear ODE system
\begin{equation*}
    \dot{\mathbf Y}(t;f,g)
    =
    \mathbf C_t\bigl(H(t)+f(t)-g(t)\bigr)
    -
    \frac{\lambda_p}{\sigma_p}\mathbf M_t\mathbf Y(t;f,g),
    \qquad
    \mathbf Y(0;f,g)=0,
\end{equation*}
with
\begin{equation*}
    \mathbf C_t:=
    \begin{pmatrix}
        e^{\kappa_f t}\alpha_t\\
        e^{\kappa_s t}\beta_t
    \end{pmatrix},
    \qquad
    \mathbf M_t:=
    \begin{pmatrix}
        \alpha_t & e^{(\kappa_f-\kappa_s)t}\alpha_t\\
        e^{-(\kappa_f-\kappa_s)t}\beta_t & \beta_t
    \end{pmatrix}.
\end{equation*}
\end{theorem}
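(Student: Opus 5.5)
The plan is to solve the linear filtering SDE \eqref{kalman_bucy_equations} for $\kdrift$ by variation of constants, read off its first component, and then rewrite the innovation integral in terms of observed prices. The term involving $\hat F$ itself becomes the Volterra feedback, which we encode through the auxiliary system for $\mathbf Y$. Before anything else, record that the Riccati equation for $\kvar$ has a unique $C^1$ solution on $[0,T]$. This is standard: $\kvar_t$ remains positive semidefinite and bounded, so there is no blow-up. Consequently $K_t$ in \eqref{drift_volatility} is $C^1$, hence so are $\alpha$ and $\beta$, which justifies the integrations by parts below.

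\textbf{Step 1 (variation of constants).} Since $\kappa$ is constant, Itô's formula applied to $e^{\kappa t}\kdrift_t$ gives
\[
\kdrift_t=e^{-\kappa t}\kdrift_0+\int_0^t e^{-\kappa(t-u)}\mu\dd u+\int_0^t e^{-\kappa(t-u)}K_u\dd\nu_u .
\]
Take the first row. Because $e_1^{\intercal}e^{-\kappa s}=\Phi(s)^{\intercal}$, the deterministic part is exactly $H(t)$. For the stochastic integrand, use $\Delta_{f,s}(s)=\frac{\lambda_f}{\kappa_s-\kappa_f}(e^{-\kappa_f s}-e^{-\kappa_s s})$ to obtain
\[
\Phi(t-u)\Cdot K_u=e^{-\kappa_f(t-u)}\alpha_u-e^{-\kappa_s(t-u)}\beta_u .
\]
This is precisely where the fast/slow split and the weights $\alpha,\beta$ arise.

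\textbf{Step 2 (price filters).} Substitute
\[
\dd\nu_u=\dd\nprice_u-\frac{\lambda_p}{\sigma_p}\hat F_u\dd u,\qquad \dd\nprice_u=\frac{1}{\sigma_p}\bigl(\dd P_u+\kappa_pP_u\dd u\bigr),
\]
the latter by \eqref{normalized_price}. For a $C^1$ weight $\delta$ and rate $\gamma$, the product rule gives
\[
\dd\bigl(e^{\gamma u}\delta_uP_u\bigr)=e^{\gamma u}\delta_u\dd P_u+e^{\gamma u}(\gamma\delta_u+\dot\delta_u)P_u\dd u .
\]
Therefore
\[
\frac{1}{\sigma_p}\int_0^t e^{-\gamma(t-u)}\delta_u\bigl(\dd P_u+\kappa_pP_u\dd u\bigr)=\mathcal E_t(\gamma,\delta).
\]
No stochastic integral survives, since $\delta$ is deterministic. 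Applying this with $(\kappa_f,\alpha)$ and $(\kappa_s,\beta)$ produces the divergence $\mathcal E_t(\kappa_f,\alpha)-\mathcal E_t(\kappa_s,\beta)$.

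\textbf{Step 3 (Volterra correction).} The remaining term is
\[
-\frac{\lambda_p}{\sigma_p}\int_0^t\bigl(e^{-\kappa_f(t-u)}\alpha_u-e^{-\kappa_s(t-u)}\beta_u\bigr)\hat F_u\dd u .
\]
Set
\[
Y_f(t):=\int_0^te^{\kappa_fu}\alpha_u\hat F_u\dd u,\qquad Y_s(t):=\int_0^te^{\kappa_su}\beta_u\hat F_u\dd u .
\]
Then the remaining term equals $\mathfrak B_t$. We now have the representation
\[
\hat F=H+f-g+\mathfrak B(f,g),\qquad f=\mathcal E_{\Cdot}(\kappa_f,\alpha),\quad g=\mathcal E_{\Cdot}(\kappa_s,\beta).
\]
Differentiating $Y_f$ and $Y_s$ and inserting this identity for $\hat F$ closes the loop. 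This yields a linear ODE with forcing $\mathbf C_t(H+f-g)$ and a feedback matrix built from $\alpha_t,\beta_t$ and the factors $e^{\pm(\kappa_f-\kappa_s)t}$, which is to be matched with $-\frac{\lambda_p}{\sigma_p}\mathbf M_t\mathbf Y$. The coefficients are continuous and $f,g$ are continuous pathwise, so the linear ODE has a unique $C^1$ solution by Picard–Lindelöf. Uniqueness identifies the solution with the $Y$'s constructed from $\hat F$, and also shows that $\mathfrak B$ maps $C([0,T],\R^2)$ into $C^1$.

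\textbf{Main obstacle.} I expect the main obstacle to be the bookkeeping in Step 3, namely matching the signs and exponential factors of the feedback matrix with $\mathbf M_t$. The relative minus sign between the fast and slow terms in $\mathfrak B$ feeds into the off-diagonal entries, so I will carefully reconcile the sign conventions of $\beta$ and $\mathfrak B$ there, for instance by rewriting the system in the variable $(Y_f,-Y_s)$ if necessary. Pathwise regularity of the $\mathcal E$ terms is routine once Step 2 has removed all stochastic integrals.
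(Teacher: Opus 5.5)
Your route is the paper's: variation of constants, the split $\Phi(t-u)\Cdot K_u=e^{-\kappa_f(t-u)}\alpha_u-e^{-\kappa_s(t-u)}\beta_u$, integration by parts against deterministic $C^1$ weights, and the auxiliary variables $Y_f,Y_s$. Steps 1--2 are correct. The gap is the matching you deferred, and it is not a bookkeeping problem: it cannot be completed for the statement as printed. With $f,g$ as in your Step 3, $\mathfrak B_t=-\frac{\lambda_p}{\sigma_p}\bigl(e^{-\kappa_f t}Y_f(t)-e^{-\kappa_s t}Y_s(t)\bigr)$ and $\dot{\mathbf Y}_t=\mathbf C_t\hat F_t=\mathbf C_t\bigl(H(t)+f(t)-g(t)+\mathfrak B_t\bigr)$. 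Carrying out the computation gives
\begin{equation*}
    \dot{\mathbf Y}_t=\mathbf C_t\bigl(H(t)+f(t)-g(t)\bigr)-\frac{\lambda_p}{\sigma_p}
    \begin{pmatrix}
        \alpha_t & -e^{(\kappa_f-\kappa_s)t}\alpha_t\\
        e^{-(\kappa_f-\kappa_s)t}\beta_t & -\beta_t
    \end{pmatrix}\mathbf Y_t .
\end{equation*}
So the feedback matrix is the rank-one matrix $\mathbf C_t\,\bigl(e^{-\kappa_f t},\,-e^{-\kappa_s t}\bigr)$ (column times row). The printed $\mathbf M_t$ is instead $\mathbf C_t\,\bigl(e^{-\kappa_f t},\,e^{-\kappa_s t}\bigr)$, so its second column has the wrong sign. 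The paper's own proof asserts the printed system after ``differentiating'' without displaying this computation, so it carries the same sign error.

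Relabelling the unknown, for instance passing to $(Y_f,-Y_s)$, cannot repair this, because the printed triple $(\mathbf C_t,\mathbf M_t,\mathfrak B)$ defines a different operator. Unwinding it gives $\mathfrak B_t(f,g)=-\frac{\lambda_p}{\sigma_p}\int_0^t\bigl(e^{-\kappa_f(t-u)}\alpha_u-e^{-\kappa_s(t-u)}\beta_u\bigr)G_u\dd u$. Here $G$ solves the Volterra equation with kernel $e^{-\kappa_f(t-u)}\alpha_u+e^{-\kappa_s(t-u)}\beta_u$ rather than the true kernel. Hence $G\neq\hat F$ in general, and the printed formula returns $\hat F$ only in degenerate cases such as $\beta\equiv0$. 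In the variable $(Y_f,-Y_s)$ you would recover the printed shape of $\mathbf M_t$ only after also replacing $\beta$ by $-\beta$ in $\mathbf C_t$ and $\mathbf M_t$ and flipping the sign of the $Y_s$-term in $\mathfrak B$; that is a different statement.

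What your argument does prove, once the computation above is written out, is the theorem with $\mathbf M_t$ replaced by the corrected matrix and everything else unchanged. The pair $(Y_f,Y_s)$ built from $\hat F$ solves that linear system, which has continuous coefficients and pathwise continuous forcing, and uniqueness identifies it with $\mathbf Y(\,\Cdot\,;f,g)$.
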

\begin{proof}
    Solving the filtered drift equation \eqref{kalman_bucy_equations} via variation of constants yields
    \begin{equation*}
        \kdrift_t=e^{-\kappa t}\kdrift_0+\int_0^te^{-\kappa(t-u)}\mu\dd u+\int_0^t e^{-\kappa(t-u)}K_u\dd\nu_u,
    \end{equation*}
    and thus we obtain the following equation for the estimate $\hat F_t$:
    \begin{align*}
        \hat F_t&=e_1^{\intercal}\kdrift_t\\
        &=\Phi(t)\Cdot\kdrift_0+\int_0^t\Phi(t-u)\Cdot\mu\dd u+\int_0^t\Phi(t-u)\Cdot K_u\dd \nu_u\\
        &=H(t)+\int_0^t\Phi(t-u)\Cdot K_u\dd\nu_u.
    \end{align*}
    Now substitute the innovation equation
    \begin{equation*}
        \dd \nu_u=\dd\nprice_u-\frac{\lambda_p}{\sigma_p}\hat F_u\,\dd u
        =\frac{1}{\sigma_p}\dd P_u+\frac{\kappa_p}{\sigma_p}P_u\,\dd u-\frac{\lambda_p}{\sigma_p}\hat F_u\,\dd u,
    \end{equation*}
    to obtain
    \begin{equation}\label{uncorrected_estimate_equation}
        \hat F_t=H(t)+\frac{\kappa_p}{\sigma_p}\int_0^t\Phi(t-u)\Cdot K_u P_u\dd u+\frac{1}{\sigma_p}\int_0^t\Phi(t-u)\Cdot K_u\dd P_u-\frac{\lambda_p}{\sigma_p}\int_0^t\Phi(t-u)\Cdot K_u\hat F_u\dd u.
    \end{equation}
    Plugging the identity
    \begin{equation*}
        \Phi(t-u)\Cdot K_u=e^{-\kappa_f(t-u)}\alpha_u-e^{-\kappa_s(t-u)}\beta_u
    \end{equation*}
    into \eqref{uncorrected_estimate_equation}, we obtain
    \begin{align*}
        \hat F_t&=H(t)+\left[\frac{\kappa_p}{\sigma_p}\int_0^t e^{-\kappa_f(t-u)}\alpha_uP_u\dd u+\frac{1}{\sigma_p}\int_0^t e^{-\kappa_f(t-u)}\alpha_u\dd P_u\right] \\
        &\quad-\left[\frac{\kappa_p}{\sigma_p}\int_0^t e^{-\kappa_s(t-u)}\beta_uP_u\dd u+\frac{1}{\sigma_p}\int_0^t e^{-\kappa_s(t-u)}\beta_u\dd P_u\right] \\
        &\quad-\frac{\lambda_p}{\sigma_p}\int_0^t\left(e^{-\kappa_f(t-u)}\alpha_u-e^{-\kappa_s(t-u)}\beta_u\right)\hat F_u\dd u.
    \end{align*}
    For $\delta\in C^1([0,T])$ and fixed $t$, integration by parts gives
    \begin{align*}
        \int_0^t e^{-\gamma(t-u)}\delta_u\dd P_u&=\delta_tP_t-e^{-\gamma t}\delta_0P_0-\int_0^te^{-\gamma(t-u)}\bigl(\gamma\delta_u+\dot\delta_u\bigr)P_u\dd u.
    \end{align*}
    Therefore
    \begin{align*}
        &\frac{\kappa_p}{\sigma_p}\int_0^t e^{-\gamma(t-u)}\delta_uP_u\dd u+\frac{1}{\sigma_p}\int_0^t e^{-\gamma(t-u)}\delta_u\dd P_u \\
        &\qquad = \frac{\delta_t}{\sigma_p}P_t-\frac{e^{-\gamma t}\delta_0}{\sigma_p}P_0+\frac{1}{\sigma_p}\int_0^t e^{-\gamma(t-u)}\bigl((\kappa_p-\gamma)\delta_u-\dot\delta_u\bigr)P_u\dd u\\
        &\qquad=\mathcal E_t(\gamma,\delta).
    \end{align*}
    Consequently,
    \begin{equation}\label{uncorrected_estimate_equation_2}
        \hat F_t
        =
        H(t)
        +
        \mathcal{E}_t(\kappa_f,\alpha)
        -
        \mathcal{E}_t(\kappa_s,\beta)
        -
        \frac{\lambda_p}{\sigma_p}
        \int_0^t
        \left(
            e^{-\kappa_f(t-u)}\alpha_u
            -
            e^{-\kappa_s(t-u)}\beta_u
        \right)
        \hat F_u\dd u.
    \end{equation}
    This is a linear Volterra integral equation with continuous and deterministic kernel. Define
    \begin{equation*}
        Y_f(t):=\int_0^t e^{\kappa_f u}\alpha_u\hat F_u\dd u,\qquad Y_s(t):=\int_0^te^{\kappa_s u}\beta_u\hat F_u\dd u.
    \end{equation*}
    Observe that since $\kvar_t$ is deterministic and differentiable, so are $\alpha_t$ and $\beta_t$, and therefore $Y_f$ and $Y_s$ are pathwise differentiable. Differentiating and applying equation \eqref{uncorrected_estimate_equation_2}, we find that ${\bf Y}_t:=(Y_f(t),Y_s(t))^{\intercal}$ is the unique solution to the following linear differential equation
    \begin{align*}
        \dot{\bf Y}_t&=\begin{pmatrix}
            e^{\kappa_f t}\alpha_t\\ e^{\kappa_s t}\beta_t
        \end{pmatrix}\hat F_t\\
        &={\bf C}_t\Big(H(t)+\mathcal{E}_t(\kappa_f,\alpha)-\mathcal{E}_t(\kappa_s,\beta)\Big)-\frac{\lambda_p}{\sigma_p}{\bf M}_t{\bf Y}_t.
    \end{align*}
    Thus, by definition of the operator $\mathfrak{B}$,
    \begin{align*}
        \mathfrak{B}_t(\mathcal{E}_{\Cdot}(\kappa_f,\alpha),\mathcal{E}_{\Cdot}(\kappa_s,\beta))&=-\frac{\lambda_p}{\sigma_p}e^{-\kappa_f t}Y_f(t)+\frac{\lambda_p}{\sigma_p}e^{-\kappa_s t}Y_s(t)\\
        &=-\frac{\lambda_p}{\sigma_p}\int_0^t\left(e^{-\kappa_f(t-u)}\alpha_u-e^{-\kappa_s(t-u)}\beta_u\right) \hat F_u\dd u.
    \end{align*}
    Substituting this identity into the preceding equation yields the claimed representation.
\end{proof}
\begin{remark}
    The representation in Theorem \ref{macd_representation} shows that the nontrivial path dependence of the filtered mean-reversion level $\hat F_t$ is mediated by the fast--slow exponential divergence signal
    \begin{equation*}
        \mathcal E_t(\kappa_f,\alpha)-\mathcal E_t(\kappa_s,\beta).
    \end{equation*}
    This signal is MACD-type in the sense that it compares fast and slow exponential price-level filters of the observed price path. Unlike the textbook discrete-time MACD indicator, however, the weights $\alpha$ and $\beta$ are determined endogenously by the Kalman-Bucy filter and may vary over time, and the finite-horizon representation includes the deterministic Volterra correction operator $\mathfrak B$. Thus the MACD-type structure is not imposed as a trading rule; it appears naturally in the filtered estimate of the latent drift component. In the next section, we show that this MACD-type signal enters directly into the optimal feedback controls via this filtered estimate.
\end{remark}

%
%

\section{Optimal Strategies}

We now return to the partial-information portfolio problem introduced in Section \ref{optimal_portfolio_problem}. By the filtering results of the previous section, the problem admits a Markovian formulation in terms of the state process $(V_t^\varphi,P_t,\kdrift_t)$, where $V_t^\varphi$ denotes the current wealth associated to the strategy $\varphi$, $P_t$ the current price, and $\kdrift_t=(\hat F_t,\hat S_t)^{\intercal}$ the filtered drift estimate.

\medskip
If $\varphi$ denotes the investor's portfolio process, then the controlled state dynamics are
\begin{equation}\label{controlled_state_dynamics}
    \begin{cases}
        \dd V_t^{\varphi}=\varphi_t(\lambda_p\hat F_t-\kappa_p P_t)\dd t+\sigma_p\varphi_t\dd\nu_t,\\[6pt]
        \dd P_t=(\lambda_p\hat F_t-\kappa_p P_t)\dd t+\sigma_p\dd\nu_t,\\[6pt]
        \dd \kdrift_t=(\mu-\kappa\kdrift_t)\dd t+K_t\dd \nu_t.
    \end{cases}
\end{equation}
Since $K_t=\sigma\rho+\frac{\lambda_p}{\sigma_p}\kvar_te_1$, and $\kvar_t$ is deterministic, this is a time-inhomogeneous Markov control problem driven by the innovation Brownian motion $\nu$.

\subsection{The Hamilton-Jacobi-Bellman Equation}

Let $\cwealth \in\mathbb R$ denote the current wealth, $\cprice \in\mathbb R$ denote the current price, and $\cdrift:=(\cfast,\cslow)^{\intercal}\in\R^2$ denote a generic filtered drift state. For a given utility function $U$, define the value function
\begin{equation*}
    \val_U(t,\cwealth,\cprice,\cdrift):=
    \sup_{\varphi\in\mathcal A_U(\mathbb F^P)}
    \E\!\left[
        U(V_T^\varphi)\mid\,V_t^\varphi=\cwealth,\ P_t=\cprice,\ \kdrift_t=\cdrift
    \right],
\end{equation*}
for $t\in\T:=[0,T]$, with terminal condition $\val_U(T,\cwealth,\cprice,\cdrift)=U(\cwealth)$.

\medskip
Fix an open subset $\mathcal{O}\subset\R^4$ and a test function $\psi\in C^{1,2}(\T\times\mathcal{O})$. The infinitesimal generator
\begin{equation*}
    \mathfrak L\psi=\mathfrak L\psi(t,\cwealth,\cprice,\cdrift,\varphi):\T\times\mathcal{O}\times\R\to\R
\end{equation*} 
corresponding to the controlled state dynamics in \eqref{controlled_state_dynamics} is defined by
\begin{align}
    \mathfrak L\psi&:=
    \varphi(\lambda_p \cfast-\kappa_p \cprice)\,\partial_\cwealth\psi
    +(\lambda_p \cfast-\kappa_p \cprice)\,\partial_\cprice\psi
    +(\mu-\kappa \cdrift)\Cdot\nabla\psi \nonumber\\
    &\quad
    \label{generic_hjb_expanded}
    +\frac12\sigma_p^2\varphi^2\,\partial_{\cwealth\cwealth}\psi
    +\frac12\sigma_p^2\,\partial_{\cprice\cprice}\psi
    +\frac12\mathrm{Tr}\!\left(K_tK_t^{\intercal}D^2\psi\right) \\
    &\quad
    +\sigma_p^2\varphi\,\partial_{\cwealth\cprice}\psi
    +\sigma_p\varphi\,K_t\Cdot\nabla\partial_\cwealth\psi
    +\sigma_p K_t\Cdot\nabla\partial_\cprice\psi,\nonumber
\end{align}
where $\nabla:=\nabla_{\cdrift}$ is the gradient operator and $D^2:=D^2_{\cdrift}$ is the Hessian operator for $\cdrift$, respectively.
Accordingly, the Hamilton-Jacobi-Bellman (HJB) equation takes the form
\begin{equation}\label{generic_hjb}
    \begin{cases}
        \partial_t\val_U+\sup_{\varphi\in\mathbb R}\mathfrak L\val_U=0,\\[4pt]
        \val_U(T,\cwealth,\cprice,\cdrift)=U(\cwealth).
    \end{cases}
\end{equation}
The optimization over $\varphi$ is quadratic, and the corresponding formal first-order condition yields the candidate optimal trading strategy
\begin{equation}\label{candidate_strategy}
    \varphi_U^*(t,\cwealth,\cprice,\cdrift):=-\frac{
        (\lambda_p\cfast-\kappa_p\cprice)\,\partial_\cwealth\val_U
        +\sigma_p^2\partial_{\cwealth\cprice}\val_U
        +\sigma_p K_t\Cdot\nabla\partial_\cwealth\val_U}{\sigma_p^2\partial_{\cwealth\cwealth}\val_U},
\end{equation}
whenever $\partial_{\cwealth\cwealth}\val_U<0$.

\begin{remark}
    The candidate optimal strategy $\varphi_U^*$ depends on the filtered state through the quantity
    \begin{equation}\label{macd}
        m(\cprice,\cfast):=\lambda_p \cfast-\kappa_p \cprice.
    \end{equation}
    Since $\cfast=\hat F_t$, the MACD-type representation established in Theorem \ref{macd_representation} enters directly into the candidate optimal strategy through the drift term $m(\cprice,\cfast)$.
\end{remark}
\begin{remark}
    Although the infinitesimal generator and the formal HJB structure are identical for all three utility functions we consider, the form of the value function and the resulting optimal feedback rule depend strongly on the terminal condition and the homogeneity properties of the underlying utility specification. In the next subsection, we exploit these differences to construct explicit candidate value functions and derive the associated optimal trading strategies.
\end{remark}

\subsection{Utility-Specific Reductions}

We now specialize the generic HJB equation \eqref{generic_hjb} to the three utility specifications considered in this paper: logarithmic, power, and exponential utility. In each case, the structure of the utility suggests a natural ansatz for the value function, which reduces the HJB equation to a deterministic system of ordinary differential equations for time-dependent coefficients. For logarithmic and exponential utility, this reduction leads to linear systems, whereas for power utility it yields a matrix Riccati system. The same reduction also produces the corresponding candidate optimal feedback controls. 

\medskip
Throughout this section, let
\begin{equation*}
    \cpricedrift:=\begin{pmatrix}
        \cprice\\\cdrift
    \end{pmatrix}\in\R^3
\end{equation*}
denote the price-drift state variable. For each utility specification, we seek a candidate value function determined by a quadratic state functional of the form
\begin{equation*}
    \quadraticstate(t,\cpricedrift)=\cpricedrift^{\intercal}\quadraticcoef_t\cpricedrift+\linearcoef_t\Cdot\cpricedrift+\quadraticshift(t),
\end{equation*}
where the coefficient functions
\begin{equation*}
    \quadraticcoef:\T\to\mathrm{Sym}(3,\R),\qquad\linearcoef:\T\to\R^3,\qquad\quadraticshift:\T\to\R
\end{equation*}
are assumed to be $C^1$ and depend on the utility under consideration. For notational convenience, we also define the coefficient matrix and vectors
\begin{equation*}
    \K:=\begin{pmatrix}
        -\kappa_p & \lambda_p & 0\\
        0 & -\kappa_f & \lambda_f\\
        0 & 0 & -\kappa_s
    \end{pmatrix}\in\R^{3\times3},\qquad
    \m:=\begin{pmatrix}
        0\\\mu
    \end{pmatrix}\in\R^3,\qquad
    \const:=\begin{pmatrix}
        -\kappa_p\\\lambda_p\\0
    \end{pmatrix}\in\R^3,
\end{equation*}
and the drift volatility state vector
\begin{equation*}
    \volatilitystate_t:=\begin{pmatrix}
        \sigma_p\\ K_t
    \end{pmatrix}\in\R^3.
\end{equation*}

\subsubsection{Logarithmic Utility}

We first consider the logarithmic utility function
\begin{equation*}
    U_{\log}(\cwealth):=\log \cwealth.
\end{equation*}
The homogeneity of logarithmic utility suggests an additive separation of the wealth variable, under which the HJB equation reduces to a linear equation in the price-drift state variable. Under this ansatz, the logarithmic HJB equation closes at the level of the quadratic coefficients, yielding the following characterization.

\begin{proposition}\label{log_hjb_reduction}
There exists a unique $C^1$ solution $(\quadraticcoef,\linearcoef,\quadraticshift)$ of the backward ODE system 
\begin{equation}\label{log_reduced_ode}
\begin{cases}
    \dot \quadraticcoef_t+\quadraticcoef_t\K+\K^{\intercal}\quadraticcoef_t
    +\frac{1}{2\sigma_p^2}\const\const^{\intercal}=0,\\[4pt]
    \dot \linearcoef_t
    +\K^{\intercal}\linearcoef_t
    +2\quadraticcoef_t \m=0,\\[4pt]
    \dot \quadraticshift(t)
    +\m\Cdot\linearcoef_t
    +\volatilitystate_t^{\intercal}\quadraticcoef_t\volatilitystate_t=0,\\[4pt]
    \quadraticcoef(T)=0,\quad \linearcoef(T)=0,\quad \quadraticshift(T)=0.
\end{cases}
\end{equation}
on $\T$. Define the logarithmic candidate value function by
\begin{equation}
    \psi_{\log}(t,\cwealth,\cpricedrift)=\log \cwealth+\quadraticstate(t,\cpricedrift).
\end{equation}
Then $\psi_{\log}$ solves the logarithmic HJB equation \eqref{generic_hjb}, and the candidate optimal feedback control (defined by \eqref{candidate_strategy} to satisfy the formal first-order condition) has expression
\begin{equation}\label{log_candidate_strategy}
    \varphi_{\log}^*(t,\cwealth,\cpricedrift)=\frac{\cwealth}{\sigma_p^2}m(\cprice,\cfast),
\end{equation}
where $m$ is defined as in \eqref{macd}.
\end{proposition}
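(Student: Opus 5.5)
The plan has three steps: solve the coefficient ODEs, substitute the ansatz into the generator to get the candidate control, and then check that the HJB equation reduces to \eqref{log_reduced_ode}.

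\textbf{Existence and uniqueness.} The system \eqref{log_reduced_ode} is linear and triangular.
\begin{itemize}
\item The equation for $\quadraticcoef$ is a linear Lyapunov-type ODE with constant coefficients. Its unique solution is
\begin{equation*}
\quadraticcoef_t=\frac{1}{2\sigma_p^2}\int_t^T e^{\K^{\intercal}(s-t)}\const\const^{\intercal}e^{\K(s-t)}\dd s.
\end{equation*}
This solution is symmetric, smooth and positive semidefinite.
\item Given $\quadraticcoef$, the equation for $\linearcoef$ is linear with continuous forcing $2\quadraticcoef_t\m$. It has the unique solution $\linearcoef_t=\int_t^T e^{\K^{\intercal}(s-t)}2\quadraticcoef_s\m\dd s$.
\item The function $\quadraticshift$ is then obtained by direct integration. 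The forcing term $\volatilitystate_t^{\intercal}\quadraticcoef_t\volatilitystate_t$ is continuous because $K_t$ is $C^1$, which follows from the Riccati equation for $\kvar_t$ in \eqref{kalman_bucy_equations}.
\end{itemize}
Standard linear ODE theory therefore gives a unique $C^1$ solution on $\T$.

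\textbf{Optimizing over $\varphi$.} Substitute $\psi_{\log}$ into \eqref{generic_hjb_expanded}. Write $\nabla_\cpricedrift$ and $D^2_\cpricedrift$ for derivatives in $\cpricedrift=(\cprice,\cdrift)$. Since $\partial_\cwealth\psi=1/\cwealth$ and $\partial_{\cwealth\cwealth}\psi=-1/\cwealth^2<0$, while all mixed derivatives $\partial_{\cwealth\cprice}\psi$ and $\nabla\partial_\cwealth\psi$ vanish, the $\varphi$-dependent part of $\mathfrak L\psi$ is
\begin{equation*}
\varphi\, m(\cprice,\cfast)/\cwealth-\tfrac12\sigma_p^2\varphi^2/\cwealth^2 .
\end{equation*}
This is strictly concave in $\varphi$. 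Formula \eqref{candidate_strategy} then gives \eqref{log_candidate_strategy} directly, and the supremum equals $m^2/(2\sigma_p^2)$.

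Next express the remaining terms of the generator in the variable $\cpricedrift$. The state drift is $\K\cpricedrift+\m$; indeed its first entry is $-\kappa_p\cprice+\lambda_p\cfast$, and the lower block is $\mu-\kappa\cdrift$. The diffusion vector is $\volatilitystate_t$. Hence the $\varphi$-free part of the generator is
\begin{equation*}
(\K\cpricedrift+\m)\Cdot\nabla_\cpricedrift\quadraticstate+\tfrac12\volatilitystate_t^{\intercal}D^2_\cpricedrift\quadraticstate\,\volatilitystate_t .
\end{equation*}
This combines the $\cprice\cprice$, $\cprice\cdrift$ and $\cdrift\cdrift$ second-order terms in \eqref{generic_hjb_expanded}. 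Also $m=\const\Cdot\cpricedrift$, so $m^2=\cpricedrift^{\intercal}\const\const^{\intercal}\cpricedrift$.

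\textbf{Matching coefficients.} Use $\nabla_\cpricedrift\quadraticstate=2\quadraticcoef_t\cpricedrift+\linearcoef_t$ and $D^2_\cpricedrift\quadraticstate=2\quadraticcoef_t$. The HJB expression then becomes a quadratic polynomial in $\cpricedrift$. Its coefficients are:
\begin{itemize}
\item quadratic part: $\dot\quadraticcoef+\quadraticcoef\K+\K^{\intercal}\quadraticcoef+\frac{1}{2\sigma_p^2}\const\const^{\intercal}$, after symmetrizing $2\cpricedrift^{\intercal}\quadraticcoef\K\cpricedrift$;
\item linear part: $\dot\linearcoef+\K^{\intercal}\linearcoef+2\quadraticcoef\m$;
\item constant part: $\dot\quadraticshift+\m\Cdot\linearcoef+\volatilitystate^{\intercal}\quadraticcoef\volatilitystate$.
\end{itemize}
These vanish exactly by \eqref{log_reduced_ode}. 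The terminal conditions give $\psi_{\log}(T)=\log\cwealth$.

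The main obstacle is bookkeeping rather than any conceptual difficulty. Two points need care. First, the symmetrization of the drift term must be done correctly. Second, the cross terms $\sigma_pK_t\Cdot\nabla\partial_\cprice\psi$ and $\frac12\mathrm{Tr}(K_tK_t^{\intercal}D^2\psi)$ together with $\frac12\sigma_p^2\partial_{\cprice\cprice}\psi$ must be checked to assemble exactly into $\volatilitystate_t^{\intercal}\quadraticcoef_t\volatilitystate_t$.
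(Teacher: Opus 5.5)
Your proposal is correct and follows the paper's proof essentially step for step. You use the triangular linear structure of the ODE system for existence and uniqueness, and the wealth derivatives to obtain $\varphi_{\log}^*=\cwealth\, m/\sigma_p^2$ with supremum $m^2/(2\sigma_p^2)$. You then rewrite the $\varphi$-free generator as $(\K\cpricedrift+\m)\Cdot\nabla_{\cpricedrift}\quadraticstate+\tfrac12\volatilitystate_t^{\intercal}D^2_{\cpricedrift}\quadraticstate\,\volatilitystate_t$ and match quadratic, linear and constant coefficients. Your explicit variation-of-constants formulas for $\quadraticcoef$ and $\linearcoef$ check out, and they, like your running of the argument in the sufficiency direction (the ODE makes the HJB residual vanish), are only cosmetic refinements of the paper's argument.
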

    
\begin{proof}
Observe that the dynamics governing $\quadraticcoef$ are uncoupled from $\linearcoef$ and $\quadraticshift$. Since the system \eqref{log_reduced_ode} is first order linear and the terms in the backward ODE
\begin{equation}\label{log_quadraticcoef_eq}
\begin{cases}
    \dot \quadraticcoef_t+\quadraticcoef_t\K+\K^{\intercal}\quadraticcoef_t
    +\frac{1}{2\sigma_p^2}\const\const^{\intercal}=0\\
    \quadraticcoef(T)=0
\end{cases}
\end{equation}
governing the dynamics of $\quadraticcoef$ are constant, it follows that there exists a unique $C^1$ solution to \eqref{log_quadraticcoef_eq} on $\T$. Similar reasoning and the continuity of $\quadraticcoef$ then gives the existence of a unique $C^1$ solution to
\begin{equation*}
\begin{cases}
    \dot \linearcoef_t
    +\K^{\intercal}\linearcoef_t
    +2\quadraticcoef_t \m=0\\
    \linearcoef(T)=0
\end{cases}
\end{equation*}
on $\T$, and therefore a unique $C^1$ solution on $\T$ to
\begin{equation*}
\begin{cases}
    \dot \quadraticshift(t)
    +\m\Cdot\linearcoef_t
    +\volatilitystate_t^{\intercal}\quadraticcoef_t\volatilitystate_t=0\\
    \quadraticshift(T)=0.
\end{cases}
\end{equation*}
Substituting the ansatz
\begin{equation*}
    \psi_{\log}(t,\cwealth,\cpricedrift)=\log \cwealth+\quadraticstate(t,\cpricedrift)
\end{equation*}
into the HJB equation \eqref{generic_hjb}, we compute
\begin{equation*}
    \partial_\cwealth\psi_{\log}=\frac{1}{\cwealth},\qquad
    \partial_{\cwealth\cwealth}\psi_{\log}=-\frac{1}{\cwealth^2},\qquad
    \partial_{\cwealth\cprice}\psi_{\log}=0,\qquad
    \nabla\partial_\cwealth\psi_{\log}=0.
\end{equation*}
Since $\partial_{\cwealth\cwealth}\psi_{\log}<0$, we obtain the candidate feedback control
\begin{equation*}
    \varphi_{\log}^*(t,\cwealth,\cpricedrift)=\frac{\cwealth}{\sigma_p^2}m(\cprice,\cfast),
\end{equation*}
via \eqref{candidate_strategy}.
Substituting this optimizer back into the HJB equation, we obtain the reduced HJB equation
\begin{align*}
    0&=\partial_t\psi_{\log}+\mathfrak L\psi_{\log}\!\mid_{\varphi=\varphi_{\log}^*}\\
    &=\partial_t\quadraticstate+\frac{m(\cprice,\cfast)^2}{2\sigma_p^2}
    +m(\cprice,\cfast)\,\partial_\cprice\quadraticstate
    +(\mu-\kappa \cdrift)\Cdot\nabla\quadraticstate\\
    &\quad+\frac12\sigma_p^2\,\partial_{\cprice\cprice}\quadraticstate
    +\frac12\mathrm{Tr}\!\left(K_tK_t^{\intercal}D^2\quadraticstate\right)
    +\sigma_p K_t\Cdot\nabla\partial_\cprice\quadraticstate\\
    &=\partial_t\quadraticstate
    +(\K\cpricedrift+\m)\Cdot\nabla_{\cpricedrift}\quadraticstate
    +\frac12\,\volatilitystate_t^{\intercal}(D_{\cpricedrift}^2\quadraticstate)\volatilitystate_t
    +\frac{1}{2\sigma_p^2}\bigl(\const^{\intercal}\cpricedrift\bigr)^2\\
    &=\cpricedrift^{\intercal}\dot{\quadraticcoef}_t\cpricedrift
    +\dot{\linearcoef}_t\Cdot\cpricedrift
    +\dot{\quadraticshift}(t)+(\K\cpricedrift+\m)\Cdot(2\quadraticcoef_t\cpricedrift+\linearcoef_t)
    +\volatilitystate_t^{\intercal}\quadraticcoef_t\volatilitystate_t
    +\frac{1}{2\sigma_p^2}\cpricedrift^{\intercal}\const\const^{\intercal}\cpricedrift,
\end{align*}
with terminal condition $\quadraticstate(T,\cpricedrift)=0$. Rearranging terms,
\begin{align*}
    0&=\cpricedrift^{\intercal}
    \left(
        \dot{\quadraticcoef}_t
        +\quadraticcoef_t\K
        +\K^{\intercal}\quadraticcoef_t
        +\frac{1}{2\sigma_p^2}\const\const^{\intercal}
    \right)
    \cpricedrift \\
    &\quad+\left(
        \dot{\linearcoef}_t
        +\K^{\intercal}\linearcoef_t
        +2\quadraticcoef_t\m
    \right)\Cdot\cpricedrift \\
    &\quad
    +
    \dot{\quadraticshift}(t)
    +\m\Cdot\linearcoef_t
    +\volatilitystate_t^{\intercal}\quadraticcoef_t\volatilitystate_t.
\end{align*}

Since this identity holds for all $\cpricedrift\in\R^3$, the quadratic, linear, and constant parts vanish separately, yielding the stated ODE system. The terminal conditions follow from $\quadraticstate(T,\cpricedrift)=0$ for all $\cpricedrift\in\R^3$.
\end{proof}

\subsubsection{Power Utility}

We next consider the power utility function
\begin{equation*}
    U_{\mathrm{pow}}(\cwealth):=\frac{\cwealth^{1-q}}{1-q},\qquad q>1.
\end{equation*}
The homogeneity of power utility suggests a multiplicative separation of the wealth variable, under which the HJB equation reduces to a nonlinear equation in the price-drift state variable. When combined with the quadratic state ansatz introduced above, this reduction closes at the level of the coefficient functions and yields a matrix Riccati system. The resulting characterization is given in the following proposition.

\begin{proposition}\label{pow_hjb_reduction}
Assume that the backward Riccati system
\begin{equation}\label{power_reduced_ode}
\begin{cases}
    \dot{\quadraticcoef}_t
    +\quadraticcoef_t\K+\K^{\intercal}\quadraticcoef_t
    +\dfrac{2}{q}\quadraticcoef_t\volatilitystate_t\volatilitystate_t^{\intercal}\quadraticcoef_t
    +\dfrac{1-q}{q\sigma_p}
    \bigl(
        \const\volatilitystate_t^{\intercal}\quadraticcoef_t
        +\quadraticcoef_t\volatilitystate_t\const^{\intercal}
    \bigr)
    +\dfrac{1-q}{2q\sigma_p^2}\const\const^{\intercal}=0,\\[10pt]
    \dot{\linearcoef}_t
    +\K^{\intercal}\linearcoef_t
    +2\quadraticcoef_t\m
    +\dfrac{2}{q}(\volatilitystate_t^{\intercal}\linearcoef_t)\,\quadraticcoef_t\volatilitystate_t
    +\dfrac{1-q}{q\sigma_p}(\volatilitystate_t^{\intercal}\linearcoef_t)\,\const=0,\\[10pt]
    \dot{\quadraticshift}(t)
    +\m\Cdot\linearcoef_t
    +\volatilitystate_t^{\intercal}\quadraticcoef_t\volatilitystate_t
    +\dfrac{1}{2q}(\volatilitystate_t^{\intercal}\linearcoef_t)^2=0,\\[10pt]
    \quadraticcoef(T)=0,\quad \linearcoef(T)=0,\quad \quadraticshift(T)=0.
\end{cases}
\end{equation}
admits a $C^1$ solution $(\quadraticcoef,\linearcoef,\quadraticshift)$ on the full horizon $\T$. Define the power candidate value function by
\begin{equation*}
    \psi_{\mathrm{pow}}(t,\cwealth,\cpricedrift)=\frac{\cwealth^{1-q}}{1-q}
    \exp\!\bigl(\quadraticstate(t,\cpricedrift)\bigr).
\end{equation*}
Then $\psi_{\mathrm{pow}}$ solves the power HJB equation \eqref{generic_hjb}, and the candidate optimal feedback control (defined by \eqref{candidate_strategy} to satisfy the formal first-order condition) has expression
\begin{equation}\label{power_candidate_strategy}
    \varphi_{\mathrm{pow}}^*(t,\cwealth,\cpricedrift)=\frac{\cwealth}{q\sigma_p^2}\left(m(\cprice,\cfast)+\sigma_p\volatilitystate_t^{\intercal}\bigl(2\quadraticcoef_t\cpricedrift+\linearcoef_t\bigr)\right),
\end{equation}
where $m$ is defined as in \eqref{macd}.
\end{proposition}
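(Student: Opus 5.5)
We substitute the ansatz $\psi_{\mathrm{pow}}=\frac{\cwealth^{1-q}}{1-q}e^{\quadraticstate}$ into the generator \eqref{generic_hjb_expanded}, mirroring the proof of Proposition \ref{log_hjb_reduction}. The existence of a $C^1$ solution of \eqref{power_reduced_ode} is assumed, so the only task is algebraic verification.

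\textbf{Step 1: derivatives and candidate control.} Write $\psi=\psi_{\mathrm{pow}}$ and $\nabla_\cpricedrift:=(\partial_\cprice,\nabla)^{\intercal}$. We compute
\begin{equation*}
\partial_\cwealth\psi=\frac{(1-q)}{\cwealth}\psi,\qquad
\partial_{\cwealth\cwealth}\psi=-\frac{q(1-q)}{\cwealth^2}\psi,\qquad
\nabla_\cpricedrift\partial_\cwealth\psi=\frac{(1-q)}{\cwealth}\psi\,\nabla_\cpricedrift\quadraticstate .
\end{equation*}
Since $q>1$ and $\psi<0$, we have $\partial_{\cwealth\cwealth}\psi=-q\cwealth^{-q-1}e^{\quadraticstate}<0$, so the first-order condition gives a maximizer. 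The cross terms in \eqref{candidate_strategy} combine as
\begin{equation*}
\sigma_p^2\partial_{\cwealth\cprice}\psi+\sigma_pK_t\Cdot\nabla\partial_\cwealth\psi=\sigma_p\frac{(1-q)}{\cwealth}\psi\,\volatilitystate_t^{\intercal}\nabla_\cpricedrift\quadraticstate.
\end{equation*}
Using $\nabla_\cpricedrift\quadraticstate=2\quadraticcoef_t\cpricedrift+\linearcoef_t$, formula \eqref{candidate_strategy} then yields \eqref{power_candidate_strategy} directly.

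\textbf{Step 2: reduced PDE for $\quadraticstate$.} Plug $\varphi^*$ back into the HJB equation and divide by $\psi\neq0$. The supremum of the quadratic in $\varphi$ equals
\begin{equation*}
-\frac{(\text{linear coefficient})^2}{2\sigma_p^2\partial_{\cwealth\cwealth}\psi}.
\end{equation*}
After dividing by $\psi$, this contributes
\begin{equation*}
\frac{(1-q)}{2q\sigma_p^2}\bigl(m+\sigma_p\volatilitystate_t^{\intercal}\nabla_\cpricedrift\quadraticstate\bigr)^2 .
\end{equation*}
The uncontrolled part contributes
\begin{equation*}
\partial_t\quadraticstate+(\K\cpricedrift+\m)\Cdot\nabla_\cpricedrift\quadraticstate+\tfrac12\volatilitystate_t^{\intercal}\bigl(D^2_\cpricedrift\quadraticstate+\nabla_\cpricedrift\quadraticstate\nabla_\cpricedrift\quadraticstate^{\intercal}\bigr)\volatilitystate_t,
\end{equation*}
since $D^2_\cpricedrift e^{\quadraticstate}=e^{\quadraticstate}(D^2\quadraticstate+\nabla\quadraticstate\nabla\quadraticstate^{\intercal})$. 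The combination of the drift terms into $\K\cpricedrift+\m$ and of the diffusion terms into $\volatilitystate_t$ is exactly as in the logarithmic case.

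\textbf{Step 3: matching coefficients.} Insert $m=\const^{\intercal}\cpricedrift$ and $\nabla_\cpricedrift\quadraticstate=2\quadraticcoef_t\cpricedrift+\linearcoef_t$. The squared-gradient terms combine with coefficient
\begin{equation*}
\tfrac12+\tfrac{1-q}{2q}=\tfrac{1}{2q}.
\end{equation*}
Collecting terms by degree gives:
\begin{itemize}
\item quadratic part $\frac{2}{q}\quadraticcoef\volatilitystate\volatilitystate^{\intercal}\quadraticcoef$, cross term $\frac{1-q}{q\sigma_p}(\const\volatilitystate^{\intercal}\quadraticcoef+\quadraticcoef\volatilitystate\const^{\intercal})$ (symmetrized), and $\frac{1-q}{2q\sigma_p^2}\const\const^{\intercal}$;
\item linear part $\frac{2}{q}(\volatilitystate^{\intercal}\linearcoef)\quadraticcoef\volatilitystate+\frac{1-q}{q\sigma_p}(\volatilitystate^{\intercal}\linearcoef)\const$;
\item constant part $\volatilitystate^{\intercal}\quadraticcoef\volatilitystate+\frac{1}{2q}(\volatilitystate^{\intercal}\linearcoef)^2$.
\end{itemize}
These are precisely the equations in \eqref{power_reduced_ode}, so the identity holds for all $\cpricedrift$. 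The terminal conditions give $\quadraticstate(T,\cdot)=0$, hence $\psi(T,\cdot)=U_{\mathrm{pow}}$.

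\textbf{Main obstacle.} The step most prone to error is the bookkeeping in Step 3. One must track the factor $\tfrac12$ versus $\tfrac{1}{2q}$ in the squared-gradient terms. One must also symmetrize $\cpricedrift^{\intercal}\const\volatilitystate^{\intercal}\quadraticcoef\cpricedrift$ so that the quadratic form is read off with a symmetric matrix, which is what makes the matrix identity equivalent to the scalar one.
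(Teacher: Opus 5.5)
Your proposal is correct and follows essentially the same route as the paper. Both compute the wealth derivatives of the ansatz, read off $\varphi_{\mathrm{pow}}^*$ from the first-order condition, divide the maximized HJB by $\psi_{\mathrm{pow}}$ to get the reduced equation containing $\frac{1-q}{2q}\bigl(\const^{\intercal}\cpricedrift/\sigma_p+\volatilitystate_t^{\intercal}\nabla_{\cpricedrift}\quadraticstate\bigr)^2$, and match quadratic, linear and constant parts; your bookkeeping ($\tfrac12+\tfrac{1-q}{2q}=\tfrac{1}{2q}$ and the symmetrized cross term) agrees with the paper's, and your final step runs in the direction the statement actually needs (the assumed ODE solution makes the identity hold for every $\cpricedrift$), whereas the paper phrases it as deducing the system from the identity.
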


\begin{proof}
Substituting the ansatz
\begin{equation*}
    \psi_{\mathrm{pow}}(t,\cwealth,\cpricedrift)=\frac{\cwealth^{1-q}}{1-q}\exp\!\bigl(\quadraticstate(t,\cpricedrift)\bigr)
\end{equation*}
into the HJB equation \eqref{generic_hjb}, we compute
\begin{equation*}
    \partial_\cwealth\psi_{\mathrm{pow}}
    =
    \cwealth^{-q}\exp\!\bigl(\quadraticstate(t,\cpricedrift)\bigr),\qquad
    \partial_{\cwealth\cwealth}\psi_{\mathrm{pow}}
    =
    -q\,\cwealth^{-q-1}\exp\!\bigl(\quadraticstate(t,\cpricedrift)\bigr),
\end{equation*}
\begin{equation*}
    \partial_{\cwealth\cprice}\psi_{\mathrm{pow}}
    =
    \cwealth^{-q}\exp\!\bigl(\quadraticstate(t,\cpricedrift)\bigr)\partial_\cprice\quadraticstate,
    \qquad
    \nabla\partial_\cwealth\psi_{\mathrm{pow}}
    =
    \cwealth^{-q}\exp\!\bigl(\quadraticstate(t,\cpricedrift)\bigr)\nabla\quadraticstate.
\end{equation*}
Since $\partial_{\cwealth\cwealth}\psi_{\mathrm{pow}}<0$, the first-order condition \eqref{candidate_strategy} yields
\begin{align*}
    \varphi_{\mathrm{pow}}^*(t,\cwealth,\cpricedrift)&=
    \frac{\cwealth}{q\sigma_p^2}
    \left(
        m(\cprice,\cfast)
        +\sigma_p^2\partial_\cprice\quadraticstate
        +\sigma_p K_t\Cdot\nabla\quadraticstate
    \right)\\
    &=\frac{\cwealth}{q\sigma_p^2}
    \left(
        m(\cprice,\cfast)
        +\sigma_p\volatilitystate_t^{\intercal}\nabla_{\cpricedrift}\quadraticstate
    \right).
\end{align*}
Substituting this optimizer back into the HJB equation, dividing through by $\psi_{\mathrm{pow}}$ and using $m(\cprice,\cfast)=\const^{\intercal}\cpricedrift$, we obtain the reduced HJB equation
\begin{align*}
    0&=\partial_t\quadraticstate+\frac{1}{2}\volatilitystate_t^{\intercal}\left(D^2_{\cpricedrift}\quadraticstate+(\nabla_{\cpricedrift}\quadraticstate)(\nabla_{\cpricedrift}\quadraticstate)^{\intercal}\right)\volatilitystate_t+(\K \cpricedrift+\m)\Cdot\nabla_{\cpricedrift}\quadraticstate
    +\frac{1-q}{2q}\left(\frac{\const^{\intercal}\cpricedrift}{\sigma_p}+\volatilitystate_t^{\intercal}\nabla_{\cpricedrift}\quadraticstate\right)^2
\end{align*}
with terminal condition $\quadraticstate(T,\cpricedrift)=0$. Using
\begin{equation*}
    \nabla_{\cpricedrift}\quadraticstate=2\quadraticcoef_t\cpricedrift+\linearcoef_t,
    \qquad
    D^2_{\cpricedrift}\quadraticstate=2\quadraticcoef_t,
\end{equation*}
expanding the square and collecting quadratic, linear, and constant terms yields
\begin{align*}
    0&=\cpricedrift^{\intercal}
    \Biggl(
        \dot{\quadraticcoef}_t
        +\quadraticcoef_t\K+\K^{\intercal}\quadraticcoef_t
        +\frac{2}{q}\quadraticcoef_t\volatilitystate_t\volatilitystate_t^{\intercal}\quadraticcoef_t+\frac{1-q}{q\sigma_p}
        \bigl(
            \const\volatilitystate_t^{\intercal}\quadraticcoef_t
            +\quadraticcoef_t\volatilitystate_t\const^{\intercal}
        \bigr)
        +\frac{1-q}{2q\sigma_p^2}\const\const^{\intercal}
    \Biggr)\cpricedrift\\
    &\quad
    +\Biggl(
        \dot{\linearcoef}_t
        +\K^{\intercal}\linearcoef_t
        +2\quadraticcoef_t\m
        +\frac{2}{q}(\volatilitystate_t^{\intercal}\linearcoef_t)\,\quadraticcoef_t\volatilitystate_t
        +\frac{1-q}{q\sigma_p}(\volatilitystate_t^{\intercal}\linearcoef_t)\,\const
    \Biggr)\Cdot\cpricedrift\\
    &\quad
    +\dot{\quadraticshift}(t)
    +\m\Cdot\linearcoef_t
    +\volatilitystate_t^{\intercal}\quadraticcoef_t\volatilitystate_t
    +\frac{1}{2q}(\volatilitystate_t^{\intercal}\linearcoef_t)^2.
\end{align*}
Since this identity holds for all $\cpricedrift\in\R^3$, the quadratic, linear, and constant parts vanish separately, yielding the stated ODE system. The terminal conditions follow from $\quadraticstate(T,\cpricedrift)=0$ for all $\cpricedrift\in\R^3$.
\end{proof}

\subsubsection{Exponential Utility}

We finally consider the exponential utility function
\begin{equation*}
    U_{\exp}(\cwealth):=-e^{-p \cwealth},\qquad p>0.
\end{equation*}
The translation invariance of exponential utility suggests an exponential-affine separation of the wealth variable, under which the HJB equation reduces to a nonlinear equation in the price-drift state variable. When combined with the quadratic state ansatz introduced above, this reduction closes at the level of the coefficient functions and yields a linear system. The resulting characterization is given in the following proposition.

\begin{proposition}\label{exp_hjb_reduction}
There exists a unique $C^1$ solution $(\quadraticcoef,\linearcoef,\quadraticshift)$ of the backward ODE system 
\begin{equation}\label{exp_reduced_ode}
\begin{cases}
    \dot{\quadraticcoef}_t
    +\quadraticcoef_t\K+\K^{\intercal}\quadraticcoef_t
    -\dfrac{1}{\sigma_p}
    \bigl(
        \const\volatilitystate_t^{\intercal}\quadraticcoef_t
        +\quadraticcoef_t\volatilitystate_t\const^{\intercal}
    \bigr)
    -\dfrac{1}{2\sigma_p^2}\const\const^{\intercal}=0,\\[10pt]
    \dot{\linearcoef}_t
    +\K^{\intercal}\linearcoef_t
    +2\quadraticcoef_t\m
    -\dfrac{1}{\sigma_p}(\volatilitystate_t^{\intercal}\linearcoef_t)\,\const=0,\\[10pt]
    \dot{\quadraticshift}(t)
    +\m\Cdot\linearcoef_t
    +\volatilitystate_t^{\intercal}\quadraticcoef_t\volatilitystate_t=0,\\[10pt]
    \quadraticcoef(T)=0,\quad \linearcoef(T)=0,\quad \quadraticshift(T)=0.
\end{cases}
\end{equation} 
on $\T$. Define the exponential candidate value function by
\begin{equation*}
    \psi_{\exp}(t,\cwealth,\cpricedrift)=-\exp\!\bigl(-p\cwealth+\quadraticstate(t,\cpricedrift)\bigr).
\end{equation*}
Then $\psi_{\exp}$ solves the exponential HJB equation \eqref{generic_hjb}, and the candidate optimal feedback control (defined by \eqref{candidate_strategy} to satisfy the formal first-order condition) has expression
\begin{equation}\label{exponential_candidate_strategy}
    \varphi_{\exp}^*(t,\cpricedrift)=\frac{1}{p\sigma_p^2}\left(m(\cprice,\cfast)+\sigma_p\volatilitystate_t^{\intercal}\bigl(2\quadraticcoef_t\cpricedrift+\linearcoef_t\bigr)\right).
\end{equation}
where $m$ is defined as in \eqref{macd}.
\end{proposition}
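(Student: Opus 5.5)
We follow the pattern of Propositions \ref{log_hjb_reduction} and \ref{pow_hjb_reduction}: first show that the backward system \eqref{exp_reduced_ode} is well posed, then substitute the ansatz into \eqref{generic_hjb}, compute the first-order condition, and match coefficients.

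\textbf{Well-posedness.} The system \eqref{exp_reduced_ode} is triangular, so it can be solved one block at a time.
\begin{itemize}
\item The equation for $\quadraticcoef$ contains no quadratic term $\quadraticcoef\volatilitystate\volatilitystate^{\intercal}\quadraticcoef$. It is a linear matrix ODE $\dot{\quadraticcoef}_t+\quadraticcoef_t A_t+A_t^{\intercal}\quadraticcoef_t-\frac{1}{2\sigma_p^2}\const\const^{\intercal}=0$, where $A_t:=\K-\frac{1}{\sigma_p}\volatilitystate_t\const^{\intercal}$. Its coefficients are continuous on $\T$, because $\kvar_t$ is $C^1$ as the solution of the Riccati equation in \eqref{kalman_bucy_equations}, and hence so is $K_t$.
\item Standard linear ODE theory then gives a unique global $C^1$ solution, which is symmetric because the transposed equation has the same solution.
\item With $\quadraticcoef$ fixed, the equation for $\linearcoef$ is linear, $\dot{\linearcoef}_t+A_t^{\intercal}\linearcoef_t+2\quadraticcoef_t\m=0$, since $(\volatilitystate^{\intercal}\linearcoef)\const=\const\volatilitystate^{\intercal}\linearcoef$.
\item The equation for $\quadraticshift$ is a direct integration.
\end{itemize}
This is the only place where the exponential case differs in substance from the power case: existence is unconditional because no Riccati term arises.

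\textbf{First-order condition.} Write $\psi=\psi_{\exp}$. Then
\[
\partial_\cwealth\psi=-p\psi,\qquad \partial_{\cwealth\cwealth}\psi=p^2\psi<0,\qquad \partial_{\cwealth\cprice}\psi=-p\psi\,\partial_\cprice\quadraticstate,\qquad \nabla\partial_\cwealth\psi=-p\psi\nabla\quadraticstate.
\]
Substituting into \eqref{candidate_strategy}, the factors of $\psi$ cancel and we get
\[
\varphi^*=\frac{1}{p\sigma_p^2}\bigl(m-\sigma_p\volatilitystate_t^{\intercal}\nabla_{\cpricedrift}\quadraticstate\bigr).
\]
Note the minus sign relative to the power case.

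\textbf{Sign issue to check.} The stated \eqref{exponential_candidate_strategy} has a plus sign. I would carefully recheck the sign convention of the ansatz $-\exp(-p\cwealth+\quadraticstate)$, since the reduced ODE signs depend on it. With the ansatz as written, the reduced equation after dividing by $\psi$ is
\[
0=\partial_t\quadraticstate+(\K\cpricedrift+\m)\Cdot\nabla\quadraticstate+\tfrac12\volatilitystate^{\intercal}\bigl(D^2\quadraticstate+\nabla\quadraticstate\nabla\quadraticstate^{\intercal}\bigr)\volatilitystate-\tfrac12\bigl(\tfrac{\const^{\intercal}\cpricedrift}{\sigma_p}-\volatilitystate^{\intercal}\nabla\quadraticstate\bigr)^2.
\]
Here the $(\volatilitystate^{\intercal}\nabla\quadraticstate)^2$ terms cancel exactly, which explains the linearity of \eqref{exp_reduced_ode}. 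What remains is
\[
\tfrac{1}{\sigma_p}(\const^{\intercal}\cpricedrift)(\volatilitystate^{\intercal}\nabla\quadraticstate)-\tfrac{1}{2\sigma_p^2}(\const^{\intercal}\cpricedrift)^2.
\]
The cross term carries a plus sign in this convention, whereas \eqref{exp_reduced_ode} has a minus sign. Replacing $\quadraticstate$ by $-\quadraticstate$ (that is, the ansatz $-\exp(-p\cwealth-\quadraticstate)$) flips the cross-term sign and the control sign but not the $\const\const^{\intercal}$ term. I therefore expect the main obstacle to be reconciling these signs. My first step would be a one-dimensional sanity check ($K\equiv0$, short horizon) to pin down the correct signs, and then I would present the computation in the consistent convention.

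\textbf{Matching coefficients.} Once the signs are settled, substitute $\nabla\quadraticstate=2\quadraticcoef_t\cpricedrift+\linearcoef_t$ and $D^2\quadraticstate=2\quadraticcoef_t$. Symmetrize the cross term as $\cpricedrift^{\intercal}(\const\volatilitystate^{\intercal}\quadraticcoef+\quadraticcoef\volatilitystate\const^{\intercal})\cpricedrift$ and use $\volatilitystate^{\intercal}\quadraticcoef\volatilitystate=\frac12\mathrm{Tr}(\volatilitystate\volatilitystate^{\intercal}D^2\quadraticstate)$. Collecting quadratic, linear, and constant parts in $\cpricedrift$ yields exactly the three equations of \eqref{exp_reduced_ode}. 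The terminal condition $\quadraticstate(T,\cdot)=0$ gives $\psi(T)=U_{\exp}$. Because the supremum in \eqref{generic_hjb} is attained at $\varphi^*$ (the Hamiltonian is strictly concave in $\varphi$ since $\partial_{\cwealth\cwealth}\psi<0$), $\psi_{\exp}$ solves the HJB equation. Finally, $\varphi^*$ is independent of $\cwealth$, as stated.
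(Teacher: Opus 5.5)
Your overall route is the paper's: solve the triangular system block by block, then substitute the ansatz, take the first-order condition, and match coefficients. Your rewriting with $A_t:=\K-\frac{1}{\sigma_p}\volatilitystate_t\const^{\intercal}$ is correct and shows the linearity cleanly. The genuine gap is a sign error in the first-order condition. It propagates into your claim that the statement is internally inconsistent, and as a result the proof is left unfinished.

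With the derivatives you yourself list, every term in the numerator of \eqref{candidate_strategy} carries the same factor $-p\psi$:
\[
m\,\partial_{\cwealth}\psi+\sigma_p^2\partial_{\cwealth\cprice}\psi+\sigma_pK_t\Cdot\nabla\partial_{\cwealth}\psi
=-p\psi\bigl(m+\sigma_p^2\partial_{\cprice}\quadraticstate+\sigma_pK_t\Cdot\nabla\quadraticstate\bigr)
=-p\psi\bigl(m+\sigma_p\volatilitystate_t^{\intercal}\nabla_{\cpricedrift}\quadraticstate\bigr).
\]
The denominator is $\sigma_p^2p^2\psi$, so the leading minus sign gives $\varphi^*=\frac{1}{p\sigma_p^2}\bigl(m+\sigma_p\volatilitystate_t^{\intercal}\nabla_{\cpricedrift}\quadraticstate\bigr)$. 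That is a plus sign, exactly \eqref{exponential_candidate_strategy}. It is also the same sign structure as the power case, as it should be, since in both cases $\psi<0$ and larger $\quadraticstate$ is worse.

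Consequently, write $G:=\const^{\intercal}\cpricedrift+\sigma_p\volatilitystate_t^{\intercal}\nabla_{\cpricedrift}\quadraticstate$. The wealth-dependent part of the Hamiltonian is $\psi\bigl(\tfrac12\sigma_p^2p^2\varphi^2-p\varphi G\bigr)$, whose supremum is $-\psi\,G^2/(2\sigma_p^2)$. After dividing by $\psi$, the last term of the reduced equation is $-\frac{1}{2\sigma_p^2}\bigl(\const^{\intercal}\cpricedrift+\sigma_p\volatilitystate_t^{\intercal}\nabla_{\cpricedrift}\quadraticstate\bigr)^2$. It is not your $-\frac12\bigl(\frac{\const^{\intercal}\cpricedrift}{\sigma_p}-\volatilitystate_t^{\intercal}\nabla_{\cpricedrift}\quadraticstate\bigr)^2$.

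The $(\volatilitystate_t^{\intercal}\nabla_{\cpricedrift}\quadraticstate)^2$ terms still cancel, as you say. The surviving cross term is $-\frac{1}{\sigma_p}(\const^{\intercal}\cpricedrift)(\volatilitystate_t^{\intercal}\nabla_{\cpricedrift}\quadraticstate)$, which is precisely the sign in \eqref{exp_reduced_ode}. So there is nothing to reconcile: the ansatz $-\exp(-p\cwealth+\quadraticstate)$ as stated is consistent.

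Your proposed switch to $-\exp(-p\cwealth-\quadraticstate)$ would actually destroy the agreement. It would give a control with the opposite sign on the hedging term, and it would flip the sign of the $\const\const^{\intercal}$ forcing relative to the other terms in the $\quadraticcoef$-equation. As written, your proposal derives a control and a reduced equation that contradict the statement and defers the resolution, so it does not prove the proposition. With the sign corrected, your coefficient-matching step goes through verbatim and coincides with the paper's proof.
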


\begin{proof}
Observe that the dynamics governing $\quadraticcoef$ are uncoupled from $\linearcoef$ and $\quadraticshift$. Since the system \eqref{exp_reduced_ode} is first order linear and the non-constant term $\volatilitystate_t=(\sigma_p,K_t)^{\intercal}$ appearing in the backward ODE
\begin{equation}\label{exp_quadraticcoef_eq}
\begin{cases}
    \dot{\quadraticcoef}_t
    +\quadraticcoef_t\K+\K^{\intercal}\quadraticcoef_t
    -\dfrac{1}{\sigma_p}
    \bigl(
        \const\volatilitystate_t^{\intercal}\quadraticcoef_t
        +\quadraticcoef_t\volatilitystate_t\const^{\intercal}
    \bigr)
    -\dfrac{1}{2\sigma_p^2}\const\const^{\intercal}=0\\
    \quadraticcoef(T)=0
\end{cases}
\end{equation}
is $C^1$, given that $K_t$ is an affine function of a solution to a constant coefficient Riccati equation, it follows that there exists a unique $C^1$ solution to \eqref{exp_quadraticcoef_eq} on $\T$. Similar reasoning and the continuity of $\quadraticcoef$ then gives the existence of a unique $C^1$ solution to
\begin{equation*}
\begin{cases}
    \dot{\linearcoef}_t
    +\K^{\intercal}\linearcoef_t
    +2\quadraticcoef_t\m
    -\dfrac{1}{\sigma_p}(\volatilitystate_t^{\intercal}\linearcoef_t)\,\const=0\\
    \linearcoef(T)=0
\end{cases}
\end{equation*}
on $\T$, and therefore a unique $C^1$ solution on $\T$ to
\begin{equation*}
\begin{cases}
    \dot{\quadraticshift}(t)
    +\m\Cdot\linearcoef_t
    +\volatilitystate_t^{\intercal}\quadraticcoef_t\volatilitystate_t=0\\
    \quadraticshift(T)=0.
\end{cases}
\end{equation*}
Substituting the ansatz
\begin{equation*}
    \psi_{\exp}(t,\cwealth,\cpricedrift)=-\exp\!\bigl(-p\cwealth+\quadraticstate(t,\cpricedrift)\bigr)
\end{equation*}
into the HJB equation \eqref{generic_hjb}, we compute
\begin{equation*}
    \partial_\cwealth\psi_{\exp}=-p\,\psi_{\exp},
    \qquad
    \partial_{\cwealth\cwealth}\psi_{\exp}=p^2\,\psi_{\exp},
\end{equation*}
\begin{equation*}
    \partial_{\cwealth\cprice}\psi_{\exp}=-p\,\partial_\cprice\quadraticstate\,\psi_{\exp},
    \qquad
    \nabla\partial_\cwealth\psi_{\exp}=-p\,\nabla\quadraticstate\,\psi_{\exp}.
\end{equation*}
Since $\partial_{\cwealth\cwealth}\psi_{\exp}<0$, the first-order condition \eqref{candidate_strategy} yields
\begin{align*}
    \varphi_{\exp}^*(t,\cpricedrift)&=
    \frac{1}{p\sigma_p^2}
    \left(
        m(\cprice,\cfast)
        +\sigma_p^2\partial_\cprice\quadraticstate
        +\sigma_p K_t\Cdot\nabla\quadraticstate
    \right)\\
    &=
    \frac{1}{p\sigma_p^2}
    \left(
        m(\cprice,\cfast)
        +\sigma_p\volatilitystate_t^{\intercal}\nabla_{\cpricedrift}\quadraticstate
    \right).
\end{align*}

Substituting this optimizer back into the HJB equation, dividing through by $\psi_{\exp}$, and using $m(\cprice,\cfast)=\const^{\intercal}\cpricedrift$, we obtain the reduced HJB equation
\begin{align*}
    0&=\partial_t\quadraticstate
    +(\K\cpricedrift+\m)\Cdot\nabla_{\cpricedrift}\quadraticstate
    +\frac12\volatilitystate_t^{\intercal}
    \left(
        D_{\cpricedrift}^2\quadraticstate
        +(\nabla_{\cpricedrift}\quadraticstate)(\nabla_{\cpricedrift}\quadraticstate)^{\intercal}
    \right)
    \volatilitystate_t
    -\frac{1}{2\sigma_p^2}
    \left(
        \const^{\intercal}\cpricedrift
        +\sigma_p\volatilitystate_t^{\intercal}\nabla_{\cpricedrift}\quadraticstate
    \right)^2\\
    &=
    \partial_t\quadraticstate
    +(\K\cpricedrift+\m)\Cdot\nabla_{\cpricedrift}\quadraticstate
    +\frac12\volatilitystate_t^{\intercal}(D_{\cpricedrift}^2\quadraticstate)\volatilitystate_t
    -\frac{1}{\sigma_p}(\const^{\intercal}\cpricedrift)\,
    \volatilitystate_t^{\intercal}\nabla_{\cpricedrift}\quadraticstate
    -\frac{1}{2\sigma_p^2}(\const^{\intercal}\cpricedrift)^2.
\end{align*}

Now substituting the quadratic ansatz and rearranging, we obtain
\begin{align*}
    0
    &=
    \cpricedrift^{\intercal}
    \Biggl(
        \dot{\quadraticcoef}_t
        +\quadraticcoef_t\K+\K^{\intercal}\quadraticcoef_t
        -\frac{1}{\sigma_p}
        \bigl(
            \const\volatilitystate_t^{\intercal}\quadraticcoef_t
            +\quadraticcoef_t\volatilitystate_t\const^{\intercal}
        \bigr)
        -\frac{1}{2\sigma_p^2}\const\const^{\intercal}
    \Biggr)\cpricedrift\\
    &\quad
    +\Biggl(
        \dot{\linearcoef}_t
        +\K^{\intercal}\linearcoef_t
        +2\quadraticcoef_t\m
        -\frac{1}{\sigma_p}(\volatilitystate_t^{\intercal}\linearcoef_t)\,\const
    \Biggr)\Cdot\cpricedrift\\
    &\quad
    +\dot{\quadraticshift}(t)
    +\m\Cdot\linearcoef_t
    +\volatilitystate_t^{\intercal}\quadraticcoef_t\volatilitystate_t.
\end{align*}
Since this identity holds for all $\cpricedrift\in\R^3$, the quadratic, linear, and constant parts vanish separately, yielding the stated ODE system. The terminal conditions follow from $\quadraticstate(T,\cpricedrift)=0$ for all $\cpricedrift\in\R^3$.
\end{proof}

%
%

\section{Admissibility and Verification}\label{admissibility_verification_section}

\subsection{Admissibility of the Candidate Strategies}

We now define the class of admissible trading strategies considered in this paper, before proving that the respective candidate strategies belong to this class.

\medskip
Throughout this subsection, write
\begin{equation*}
    \pi_t:=\const^{\intercal}\cpricedrift_t,\qquad
    B_t:=\volatilitystate_t^{\intercal}
    \bigl(2\quadraticcoef_t\cpricedrift_t+\linearcoef_t\bigr),\qquad
    \Pi_t:=\pi_t+\sigma_p B_t,
\end{equation*}
where $\quadraticcoef$ and $\linearcoef$ denote the coefficient functions corresponding to the relevant utility specification. The class of admissible strategies in the case of exponential and power utility specifications makes use of the following Doléans-Dade exponential
\begin{equation*}
    \Xi_t^{\varphi,U}:=\exp\left(\int_0^t\xi_s^{\varphi,U}\dd\nu_s-\frac12\int_0^t\left(\xi_s^{\varphi,U}\right)^2\dd s\right),
\end{equation*}
where
\begin{equation*}
    \xi_t^{\varphi,U}:=\begin{cases}
        B_t-(q-1)\sigma_p\frac{\varphi_t}{V_t^{\varphi}}, &\text{for }U=U_{\mathrm{pow}}\\
        B_t-p\sigma_p\varphi_t, &\text{for }U=U_{\exp}.
    \end{cases}
\end{equation*}

\begin{definition}[Admissible Trading Strategy]\label{admissible_strategy}
For each utility specification $U\in\{U_{\log},U_{\mathrm{pow}},U_{\exp}\}$, an $\mathbb F^P$-progressively measurable strategy $\varphi$ belongs to $\mathcal A_U(\mathbb F^P)$ if the associated wealth process satisfies
\begin{itemize}
    \item[(i)] $V_t^{\varphi}>0$ for all $t\in\T$ a.s. when $U\in\{U_{\log},U_{\mathrm{pow}}\}$
    \item[(ii)] $\E\!\left[\int_0^T(\zeta_t^{\varphi,U})^2\dd t\right]<\infty$, where $$\zeta_t^{\varphi,U}:=\begin{cases}
        \frac{\varphi_t}{V_t^{\varphi}},&\text{for }U\in\{U_{\log},U_{\mathrm{pow}}\}\\
        \varphi_t,&\text{for }U=U_{\exp}
    \end{cases}$$
    \item[(iii)] $\E\!\left[\sup_{t\in\T}\bigl|\log V_t^\varphi\bigr|\right]<\infty$, for $U=U_{\log}$
    \item[(iv)] the Doléans-Dade exponential $\Xi_t^{\varphi,U}$ is a true martingale on $\T$ when $U\in\{U_{\exp},U_{\mathrm{pow}}\}$.
\end{itemize}
\end{definition}

The next lemma gives a convenient sufficient condition for verifying the true-martingale condition in Definition \ref{admissible_strategy}. It applies directly to the candidate controls, because the relevant kernels are affine functions of the filtered state.

\begin{lemma}\label{affine_exponential_martingale_lemma}
Let $W$ be a one-dimensional Brownian motion, and let $X$ be an $\R^d$-valued process satisfying
\begin{equation*}
    \dd X_t=(A_tX_t+a_t)\dd t+\beta_t\dd W_t,\qquad X_0=x_0,
\end{equation*}
where $A:\T\to\R^{d\times d}$, $a:\T\to\R^d$, and $\beta:\T\to\R^d$ are continuous deterministic functions. Let $\xi_t=u_t\Cdot X_t+v_t$ where $u:\T\to\R^d$ and $v:\T\to\R$ are continuous deterministic functions. Then the stochastic exponential
\begin{equation*}
    Z_t:=\exp\left(\int_0^t\xi_s\dd W_s-\frac12\int_0^t\xi_s^2\dd s\right)
\end{equation*}
is a true martingale on $\T$.
\end{lemma}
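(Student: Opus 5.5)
The plan is the standard localization argument for Gaussian (affine) kernels: verify Novikov's condition on short subintervals and then chain them together. This technique goes back to Beneš and appears in Karatzas--Shreve, Corollary 3.5.14 and its proof. First, I solve the linear SDE explicitly by variation of constants. Let $\Psi(t,s)$ be the fundamental matrix of $\dot x=A_tx$. Then
\begin{equation*}
X_t=\Psi(t,0)x_0+\int_0^t\Psi(t,s)a_s\dd s+\int_0^t\Psi(t,s)\beta_s\dd W_s .
\end{equation*}
Hence $X$ is a Gaussian process with continuous deterministic mean and covariance on the compact interval $\T$. In particular $\xi_t=u_t\Cdot X_t+v_t$ is Gaussian with variance bounded uniformly in $t$. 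Moreover,
\begin{equation*}
\sup_{t\in\T}|X_t|\le C\Bigl(1+\sup_{t\in\T}\Bigl|\int_0^t\Psi(0,s)\beta_s\dd W_s\Bigr|\Bigr)
\end{equation*}
for a deterministic constant $C$. The supremum on the right is that of a continuous Gaussian martingale with bounded quadratic variation. By Fernique's theorem, or by Doob's inequality combined with the exponential martingale bound, $\E[\exp(\varepsilon\sup_t|X_t|^2)]<\infty$ for some $\varepsilon>0$.

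Second, I localize in time. Choose a partition $0=t_0<t_1<\dots<t_n=T$ with mesh $h$ small. Since $\xi_s^2\le 2|u_s|^2|X_s|^2+2v_s^2\le c(1+\sup_r|X_r|^2)$, I get
\begin{equation*}
\E\Bigl[\exp\Bigl(\tfrac12\int_{t_{k}}^{t_{k+1}}\xi_s^2\dd s\Bigr)\Bigr]\le e^{ch/2}\,\E\bigl[\exp\bigl(\tfrac{ch}{2}\sup_r|X_r|^2\bigr)\bigr]<\infty
\end{equation*}
once $ch/2\le\varepsilon$. By Novikov's criterion applied on each subinterval, each $Z_{t_{k+1}}/Z_{t_k}$ has conditional expectation $1$ given $\F_{t_k}$. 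Note that Novikov is applied here to the process $\xi\mathbbm 1_{(t_k,t_{k+1}]}$, and its exponential is a martingale.

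Third, I chain by the tower property: $\E[Z_T]=\E\bigl[Z_{t_{n-1}}\E[Z_T/Z_{t_{n-1}}\mid\F_{t_{n-1}}]\bigr]=\E[Z_{t_{n-1}}]=\dots=1$. A nonnegative local martingale is a supermartingale, so $\E[Z_T]=1$ forces it to be a true martingale on $\T$.

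The main obstacle is the uniform exponential-square integrability of $\sup_t|X_t|$. I would handle it concretely. Write $M_t=\int_0^t\Psi(0,s)\beta_s\dd W_s$; each coordinate is a continuous martingale with $\langle M^i\rangle_T\le\bar c$. Apply the exponential inequality $\prob(\sup_t|M^i_t|>x)\le 2e^{-x^2/(2\bar c)}$ to get a Gaussian tail for $\sup_t|M_t|$. This yields a finite $\E[e^{\varepsilon\sup|X|^2}]$ for $\varepsilon<1/(2d\bar cC^2)$ roughly. Everything else is bookkeeping with continuous deterministic coefficients on a compact interval.
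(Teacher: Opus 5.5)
Your proof is correct, and it takes a genuinely different route from the paper's.

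\textbf{The paper's route.} The paper localizes with the stopping times $\tau_n=\inf\{t:\int_0^t\xi_s^2\dd s\ge n\}\wedge T$ and applies Novikov only to the stopped exponentials $Z^{(n)}$. It then passes to the tilted measures $\prob^{(n)}$ by Girsanov. Under $\prob^{(n)}$ the drift $A_tX_t+a_t+\beta_t(u_t\cdot X_t+v_t)\boldsymbol{1}_{\{t\le\tau_n\}}$ still has linear growth uniformly in $n$, so It\^o and Gronwall give $\sup_n\sup_t\E^{(n)}|X_t|^2<\infty$. This becomes the uniform entropy bound $\E[Z_T^{(n)}\log Z_T^{(n)}]=\tfrac12\E^{(n)}[\int_0^{T\wedge\tau_n}\xi_s^2\dd s]$, and de la Vall\'ee-Poussin then yields uniform integrability and $\E[Z_T]=1$.

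\textbf{Your route.} You work entirely under $\prob$. The explicit Gaussian solution gives an exponential-square moment of $\sup_t|X_t|$, Novikov holds on each short subinterval, and the tower property chains the pieces (the Bene\v{s} argument). All your steps check out. This includes applying Novikov to $\xi\boldsymbol{1}_{(t_k,t_{k+1}]}$, and the chaining, which needs nothing beyond nonnegativity of $Z_{t_k}$.

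\textbf{What each buys.} The paper's argument only ever needs second moments, under the tilted measures, and never uses the explicit solution or Gaussian tail estimates. It pays for this with a change of measure and a uniform-integrability step. Yours is more elementary and quantitative, since it tells you how fine the partition must be. It pays with an exponential moment under $\prob$, which is cheap here precisely because $\beta$ and $x_0$ are deterministic.

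\textbf{A simplification.} The step you flag as the main obstacle can be bypassed. By Jensen's inequality, $\exp\bigl(\tfrac12\int_{t_k}^{t_{k+1}}\xi_s^2\dd s\bigr)\le\frac1h\int_{t_k}^{t_{k+1}}e^{h\xi_s^2/2}\dd s$. So the marginal bound $\sup_s\E[e^{h\xi_s^2/2}]<\infty$ already suffices for small $h$. That bound is immediate from your own observation that $\xi_s$ is Gaussian with uniformly bounded mean and variance, so no tail estimate for the running supremum is needed.
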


\begin{proof}
Since $X$ has continuous paths on the compact interval $\T$, and $\xi_t=u_t\Cdot X_t+v_t$ has continuous sample paths, we have $\int_0^T \xi_t^2\dd t<\infty$ a.s. Hence $Z$ is a well-defined nonnegative local martingale.

\medskip
For each $n\in\N$, define
\begin{equation*}
    \tau_n:=\inf\left\{t\in[0,T]:\int_0^t\xi_s^2\dd s\ge n\right\}\wedge T,\qquad
    Z_t^{(n)}:=\exp\left(\int_0^t\xi_s\boldsymbol{1}_{\{s\le\tau_n\}}\dd W_s-\frac12\int_0^t\xi_s^2 \boldsymbol{1}_{\{s\le\tau_n\}}\dd s\right).
\end{equation*}
Since $\int_0^T\xi_s^2 \boldsymbol{1}_{\{s\le\tau_n\}}\dd s\leq n$ a.s., Novikov's criterion implies that $Z^{(n)}$ is a true martingale on $\T$, and therefore $\E[Z_T^{(n)}]=1$.

\medskip
Define a probability measure $\prob^{(n)}$ on $\F_T$ by $\frac{\dd\prob^{(n)}}{\dd\prob}:=Z_T^{(n)}$. By Girsanov's theorem,
\begin{equation*}
    W_t^{(n)}:=W_t-\int_0^t\xi_s\boldsymbol{1}_{\{s\leq\tau_n\}}\dd s
\end{equation*}
is a Brownian motion under $\prob^{(n)}$. Under $\prob^{(n)}$, the process $X$ satisfies
\begin{align*}
    \dd X_t&=(A_tX_t+a_t)\dd t+\beta_t\left(\dd W_t^{(n)}+\xi_t\boldsymbol{1}_{\{t\leq\tau_n\}}\dd t\right)\\
    &=\Bigl(A_tX_t+a_t+\beta_t(u_t\Cdot X_t+v_t)\boldsymbol{1}_{\{t\leq\tau_n\}}\Bigr)\dd t+\beta_t\dd W_t^{(n)}.
\end{align*}
Since all coefficients are continuous and deterministic on $\T$, there exists $C>0$ such that
\begin{equation*}
    \left|A_tx+a_t+\beta_t(u_t\Cdot x+v_t)\boldsymbol{1}_{\{t\leq\tau_n\}}\right|+|\beta_t|\leq C(1+|x|)
\end{equation*}
for all $(t,x)\in\T\times\R^d$, uniformly in $n$. Applying Itô's formula to $|X_t|^2$ under $\prob^{(n)}$, taking expectations, and using Young's inequality, we obtain
\begin{equation*}
    \E^{(n)}|X_t|^2\leq|x_0|^2+C\int_0^t\Bigl(1+\E^{(n)}|X_s|^2\Bigr)\dd s.
\end{equation*}
Gronwall's inequality therefore gives $\sup_{n\in\N}\sup_{t\in\T}\E^{(n)}|X_t|^2<\infty$. Since $\xi_t=u_t\Cdot X_t+v_t$, there exists $C>0$ such that $\xi_t^2\leq C(1+|X_t|^2)$ for all $t\in\T.$
Hence
\begin{equation*}
    \sup_{n\in\N}\E^{(n)}\!\left[\int_0^T\xi_s^2 \boldsymbol{1}_{\{s\leq\tau_n\}}\dd s
    \right]=
    \sup_{n\in\N}\E^{(n)}\!\left[\int_0^{T\wedge\tau_n}\xi_s^2\dd s\right]\leq
    C\int_0^T\left(1+\sup_{n\in\N}\E^{(n)}|X_s|^2\right)\dd s<\infty.
\end{equation*}
Thus the stochastic integral $\int_0^{\Cdot} \xi_s\boldsymbol{1}_{\{s\leq\tau_n\}}\dd W_s^{(n)}$ is a true martingale under $\prob^{(n)}$, and hence has zero expectation at time $T$. Consequently,
\begin{equation*}
    \E\!\left[Z_T^{(n)}\log Z_T^{(n)}\right]=
    \E^{(n)}\!\left[\log Z_T^{(n)}\right]=\frac12\E^{(n)}\!\left[\int_0^{T\wedge\tau_n}\xi_s^2\dd s\right].
\end{equation*}
It follows that
\begin{equation*}
    \sup_{n\in\N}\E\!\left[Z_T^{(n)}\log Z_T^{(n)}\right]<\infty.
\end{equation*}
By de la Vallée-Poussin's criterion, the family $\{Z_T^{(n)}\}_{n\in\N}$ is uniformly integrable. Finally, since $\tau_n\uparrow T$ a.s., we have $Z_T^{(n)}\to Z_T$ a.s. Uniform integrability therefore yields
\begin{equation*}
    \E[Z_T]=\lim_{n\to\infty}\E[Z_T^{(n)}]=1.
\end{equation*}
Since $Z$ is a nonnegative local martingale, it is a supermartingale. Therefore the identity $\E[Z_T]=1$ implies that $Z$ is a true martingale on $\T$.
\end{proof}


\subsubsection{Admissibility of the Logarithmic Candidate Strategy}

We now turn to proving admissibility of the logarithmic candidate strategy. Since the feedback control is linear in wealth, the corresponding closed-loop wealth process admits an explicit stochastic exponential representation, and admissibility reduces to simple moment bounds for the Gaussian signal
\begin{equation*}
    \pi_t=\const^{\intercal}\cpricedrift_t.
\end{equation*}

\begin{proposition}[Admissibility of the Logarithmic Candidate Strategy]\label{log_admissibility_prop}
The logarithmic candidate feedback control $\varphi_{\log}^*$ belongs to $\mathcal A_{\log}(\mathbb F^P)$.
\end{proposition}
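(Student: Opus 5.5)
We will check conditions (i)–(iii) of Definition \ref{admissible_strategy}; condition (iv) does not apply to logarithmic utility. The first step is to close the loop. Under $\varphi^*_{\log}$ the price--drift state $\cpricedrift_t=(P_t,\kdrift_t)^{\intercal}$ does not depend on the control. By \eqref{controlled_state_dynamics} it solves the linear SDE
\begin{equation*}
    \dd\cpricedrift_t=(\K\cpricedrift_t+\m)\dd t+\volatilitystate_t\dd\nu_t ,
\end{equation*}
whose coefficients are continuous and deterministic. Hence $\cpricedrift$ is a Gaussian process with continuous paths, and its mean and covariance are continuous on $\T$. It follows that $\pi_t=\const^{\intercal}\cpricedrift_t=m(P_t,\hat F_t)$ is also Gaussian. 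Substituting $\varphi^*_{\log}=V_t\pi_t/\sigma_p^2$ into the wealth equation gives
\begin{equation*}
    \dd V_t=V_t\Bigl(\frac{\pi_t^2}{\sigma_p^2}\dd t+\frac{\pi_t}{\sigma_p}\dd\nu_t\Bigr).
\end{equation*}
This is a linear SDE in $V$ with a continuous adapted coefficient. Its unique solution is
\begin{equation*}
    V_t=v_0\exp\Bigl(\int_0^t\frac{\pi_s}{\sigma_p}\dd\nu_s+\frac12\int_0^t\frac{\pi_s^2}{\sigma_p^2}\dd s\Bigr).
\end{equation*}
Since $v_0>0$, this gives (i) immediately. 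The strategy is a continuous function of $(t,V_t,P_t,\hat F_t)$, so it is $\mathbb F^P$-progressively measurable.

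For (ii), the ratio $\zeta_t=\varphi_t/V_t=\pi_t/\sigma_p^2$ is Gaussian. Its second moment $\E[\pi_t^2]$ is continuous in $t$, so by Fubini
\begin{equation*}
    \E\Bigl[\int_0^T\zeta_t^2\dd t\Bigr]=\sigma_p^{-4}\int_0^T\E[\pi_t^2]\dd t<\infty .
\end{equation*}

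For (iii), we start from the bound
\begin{equation*}
    \sup_t|\log V_t|\le|\log v_0|+\sup_t\Bigl|\int_0^t\frac{\pi_s}{\sigma_p}\dd\nu_s\Bigr|+\frac1{2\sigma_p^2}\int_0^T\pi_s^2\dd s .
\end{equation*}
The last term has finite expectation by the computation in (ii). For the middle term we apply the Burkholder--Davis--Gundy inequality with exponent one, followed by Jensen:
\begin{equation*}
    \E\Bigl[\sup_t\Bigl|\int_0^t\pi_s\dd\nu_s\Bigr|\Bigr]\le C\,\E\Bigl[\Bigl(\int_0^T\pi_s^2\dd s\Bigr)^{1/2}\Bigr]\le C\Bigl(\int_0^T\E[\pi_s^2]\dd s\Bigr)^{1/2}<\infty .
\end{equation*}
This gives (iii).

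The only step that needs care is the uniform-in-$t$ second-moment bound on $\pi$. It reduces to $\sup_{t\in\T}\E|\cpricedrift_t|^2<\infty$. We will obtain this either from the explicit variation-of-constants formula
\begin{equation*}
    \cpricedrift_t=e^{\K t}\cpricedrift_0+\int_0^te^{\K(t-s)}\m\dd s+\int_0^te^{\K(t-s)}\volatilitystate_s\dd\nu_s,
\end{equation*}
combined with the It\^o isometry and the continuity of $K_t$ (which holds because $\kvar$ is $C^1$), or from the It\^o--Gronwall argument already used in the proof of Lemma \ref{affine_exponential_martingale_lemma} with $\xi\equiv0$. The initial state is deterministic, namely $P_0=p_0$ and $\kdrift_0=(f_0,s_0)^{\intercal}$, so the bound is finite.
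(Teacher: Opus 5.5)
Your proposal is correct and follows the paper's proof essentially step for step. The explicit stochastic-exponential solution of the closed-loop wealth SDE gives (i); continuity of $t\mapsto\E[\pi_t^2]$ on $\T$ together with Tonelli/Fubini gives (ii); and Burkholder--Davis--Gundy applied to the explicit formula for $\log V_t$ gives (iii), with your extra justifications (the uniform second-moment bound via variation of constants or Gronwall, Jensen after BDG, progressive measurability) only making explicit details the paper leaves implicit.
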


\begin{proof}
We begin by proving the strict positivity of the wealth process associated with the logarithmic candidate strategy. By \eqref{log_candidate_strategy},
\begin{equation*}
    \varphi_{\log}^*(t,\cwealth_t,\cpricedrift_t)=\frac{\cwealth_t\pi_t}{\sigma_p^2}.
\end{equation*}
Since $\pi_t=\const^{\intercal}\cpricedrift_t$ and $\cpricedrift$ is a continuous Gaussian process, $\pi$ is adapted and square-integrable on finite horizons. Hence the closed-loop wealth equation is well defined. Using the self-financing condition and the price dynamics $\dd\cprice_t=\pi_t\dd t+\sigma_p\dd\nu_t$, we obtain
\begin{equation*}
    \dd \cwealth_t=\frac{1}{\sigma_p^2}\cwealth_t\pi_t^2\dd t+\frac{1}{\sigma_p}\cwealth_t\pi_t\dd\nu_t.
\end{equation*}
Solving this linear SDE explicitly yields
\begin{equation*}
    \cwealth_t=\cwealth_0\exp\left(\frac{1}{2\sigma_p^2}\int_0^t\pi_s^2\dd s+\frac{1}{\sigma_p}\int_0^t\pi_s\dd\nu_s\right),
\end{equation*}
and therefore $\cwealth_t>0$ for all $t\in\T$.

\medskip
We next verify admissibility condition (ii). Since $\zeta_t^{\varphi_{\log}^*}=\frac{\varphi_{\log}^*(t,\cwealth_t,\cpricedrift_t)}{\cwealth_t}=\frac{\pi_t}{\sigma_p^2},$
and since $\pi$ is a linear functional of the continuous Gaussian diffusion $\cpricedrift$, the map $t\mapsto \E[\pi_t^2]$ is continuous on the compact interval $\T$. Hence $\sup_{t\in\T}\E[\pi_t^2]<\infty$. By Tonelli's theorem,
\begin{equation*}
    \E\!\left[\int_0^T\left(\zeta_t^{\varphi_{\log}^*}\right)^2\dd t\right]=\frac{1}{\sigma_p^4}\int_0^T\E[\pi_t^2]\dd t<\infty.
\end{equation*}

\medskip
Finally, we prove condition (iii). Applying Itô's formula to $\log V_t^{\varphi_{\log}^*}$ gives
\begin{equation*}
    \log V_t^{\varphi_{\log}^*}=\log V_0+\frac{1}{2\sigma_p^2}\int_0^t\pi_s^2\dd s+\frac{1}{\sigma_p}\int_0^t\pi_s\dd\nu_s.
\end{equation*}
Hence, by Burkholder-Davis-Gundy,
\begin{align*}
    \E\!\left[\sup_{t\in\T}\left|\log V_t^{\varphi_{\log}^*}\right|\right]&\leq|\log V_0|+\frac{1}{2\sigma_p^2}\E\!\left[\int_0^T\pi_s^2\dd s\right]+\frac{1}{\sigma_p}\E\!\left[\sup_{t\in\T}\left|\int_0^t\pi_s\dd\nu_s\right|\right]\\
    &\leq|\log V_0|+C_1\E\!\left[\int_0^T\pi_s^2\dd s
    \right]+C_2\E\!\left[\left(\int_0^T\pi_s^2\dd s\right)^{1/2}\right]\\
    &<\infty.
\end{align*}
Therefore $\varphi_{\log}^*\in\mathcal A_{\log}(\mathbb F^P)$.
\end{proof}


\subsubsection{Admissibility of the Power Candidate Strategy}

We next verify admissibility of the power candidate feedback control. Since the admissible class for power utility is defined in terms of the relative position and the stochastic exponential appearing in the verification argument, the proof reduces to checking square-integrability of the candidate relative position and applying Lemma \ref{affine_exponential_martingale_lemma}.

\begin{proposition}[Admissibility of the Power Candidate Strategy]\label{pow_admissibility_prop}
Assume that the power Riccati system \eqref{power_reduced_ode} admits a $C^1$-solution on $\T$. Then the power candidate feedback control $\varphi_{\mathrm{pow}}^*$ belongs to $\mathcal A_{\mathrm{pow}}(\mathbb F^P)$.
\end{proposition}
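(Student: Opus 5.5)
Following the proof of Proposition \ref{log_admissibility_prop}, I will write the closed-loop wealth explicitly and then verify conditions (i), (ii) and (iv) of Definition \ref{admissible_strategy} in that order.

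By \eqref{power_candidate_strategy}, the relative position is
\begin{equation*}
    \frac{\varphi_{\mathrm{pow}}^*}{V_t}=\frac{\Pi_t}{q\sigma_p^2},\qquad \Pi_t=\pi_t+\sigma_p B_t=\bigl(\const+2\sigma_p\quadraticcoef_t\volatilitystate_t\bigr)\Cdot\cpricedrift_t+\sigma_p\volatilitystate_t\Cdot\linearcoef_t .
\end{equation*}
This is an affine function of the state $\cpricedrift_t$, and its coefficients are continuous and deterministic because $\quadraticcoef$ and $\linearcoef$ are $C^1$ by hypothesis and $K$ is $C^1$.

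The first step is positivity, condition (i). Substituting $\dd P_t=\pi_t\dd t+\sigma_p\dd\nu_t$ into the self-financing equation gives the linear SDE
\begin{equation*}
    \dd V_t=V_t\frac{\Pi_t}{q\sigma_p^2}\bigl(\pi_t\dd t+\sigma_p\dd\nu_t\bigr).
\end{equation*}
Its unique solution is the exponential
\begin{equation*}
    V_t=v_0\exp\Bigl(\int_0^t\frac{\Pi_s\pi_s}{q\sigma_p^2}-\frac{\Pi_s^2}{2q^2\sigma_p^2}\dd s+\int_0^t\frac{\Pi_s}{q\sigma_p}\dd\nu_s\Bigr).
\end{equation*}
The integrals are well defined because $\Pi$ has continuous paths, so $V_t>0$ for all $t\in\T$.

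The second step is square integrability, condition (ii). The process $\cpricedrift$ solves the linear SDE
\begin{equation*}
    \dd\cpricedrift_t=(\K\cpricedrift_t+\m)\dd t+\volatilitystate_t\dd\nu_t
\end{equation*}
with deterministic initial data and continuous coefficients, so it is a continuous Gaussian process. Hence $t\mapsto\E[\Pi_t^2]$ is continuous on the compact interval $\T$ and therefore bounded. Tonelli's theorem then gives
\begin{equation*}
    \E\!\left[\int_0^T\Bigl(\frac{\Pi_t}{q\sigma_p^2}\Bigr)^2\dd t\right]<\infty.
\end{equation*}

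The third step is the martingale property, condition (iv), which I expect to require the most care. Substituting the candidate control gives
\begin{equation*}
    \xi_t=B_t-(q-1)\sigma_p\frac{\Pi_t}{q\sigma_p^2}=B_t-\frac{q-1}{q\sigma_p}\Pi_t.
\end{equation*}
The wealth drops out completely, which is exactly why the admissible class is formulated in terms of the relative position. Since $B_t$ and $\Pi_t$ are both affine in $\cpricedrift_t$ with continuous deterministic coefficients, we can write $\xi_t=u_t\Cdot\cpricedrift_t+v_t$ with $u,v$ continuous. Lemma \ref{affine_exponential_martingale_lemma} applies with $X=\cpricedrift$, $W=\nu$, $A=\K$, $a=\m$ and $\beta=\volatilitystate$. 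It yields that $\Xi^{\varphi^*_{\mathrm{pow}},U_{\mathrm{pow}}}$ is a true martingale on $\T$.

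The only subtle point is checking that the hypotheses of the lemma hold. Continuity of $u,v$ requires that the Riccati solution exist on all of $\T$, which is assumed, and continuity of $\volatilitystate$ follows from continuity of $\kvar$. Once these are confirmed, conditions (i), (ii) and (iv) together give $\varphi^*_{\mathrm{pow}}\in\mathcal A_{\mathrm{pow}}(\mathbb F^P)$.
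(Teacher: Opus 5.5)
Your proposal is correct and follows the paper's proof essentially step for step. The explicit stochastic-exponential formula for the closed-loop wealth gives (i), and continuity of $t\mapsto\E[\Pi_t^2]$ on $\T$ together with Tonelli gives (ii). For (iv), you write $\xi_t=B_t-\frac{q-1}{q\sigma_p}\Pi_t$ as an affine function of $\cpricedrift_t$ and invoke Lemma \ref{affine_exponential_martingale_lemma} for the linear SDE $\dd\cpricedrift_t=(\K\cpricedrift_t+\m)\dd t+\volatilitystate_t\dd\nu_t$, exactly as the paper does.
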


\begin{proof}
By \eqref{power_candidate_strategy}, the power candidate feedback control is
\begin{equation*}
    \varphi_{\mathrm{pow}}^*(t,\cwealth,\cpricedrift)=\frac{\cwealth}{q\sigma_p^2}\Pi_t.
\end{equation*}
Since $\quadraticcoef$, $\linearcoef$, and $\volatilitystate$ are deterministic and continuous on $\T$, both $B_t$ and $\Pi_t$ are affine functions of the continuous Gaussian process $\cpricedrift_t$ with continuous deterministic coefficients. In particular, $\Pi$ is a continuous Gaussian process.

\medskip
We first verify that the corresponding closed-loop wealth process is strictly positive. Along the candidate control, the wealth equation becomes
\begin{equation*}
    \dd V_t^{\varphi_{\mathrm{pow}}^*}=V_t^{\varphi_{\mathrm{pow}}^*}\frac{\Pi_t}{q\sigma_p^2}\pi_t\dd t+V_t^{\varphi_{\mathrm{pow}}^*}\frac{\Pi_t}{q\sigma_p}\dd\nu_t.
\end{equation*}
Since $\Pi$ and $\pi$ have continuous sample paths, the coefficients are pathwise square-integrable on $\T$. Therefore the linear SDE is well defined, and its explicit solution is
\begin{equation*}
    V_t^{\varphi_{\mathrm{pow}}^*}=V_0\exp\left(\int_0^t\left(\frac{\Pi_s\pi_s}{q\sigma_p^2}-\frac{\Pi_s^2}{2q^2\sigma_p^2}\right)\dd s+\frac{1}{q\sigma_p}\int_0^t\Pi_s\dd\nu_s\right).
\end{equation*}
Hence $V_t^{\varphi_{\mathrm{pow}}^*}>0$ for all $t\in\T$.

\medskip
We next verify the square-integrability condition on the relative position. Since $\zeta_t^{\varphi_{\mathrm{pow}}^*}=\frac{\Pi_t}{q\sigma_p^2}$, and since $\Pi$ is an affine functional of the continuous Gaussian diffusion $\cpricedrift$, the map $t\mapsto \E[\Pi_t^2]$ is continuous on the compact interval $\T$. Hence $\sup_{t\in\T}\E[\Pi_t^2]<\infty$. By Tonelli's theorem,
\begin{equation*}
    \E\!\left[\int_0^T\left(\zeta_t^{\varphi_{\mathrm{pow}}^*}\right)^2\dd t\right]=\frac{1}{q^2\sigma_p^4}\int_0^T\E[\Pi_t^2]\dd t<\infty.
\end{equation*}

\medskip
It remains to verify the true-martingale condition. By Definition \ref{admissible_strategy}, the relevant stochastic exponential is driven by
\begin{align*}
    \xi_t^{\varphi_{\mathrm{pow}}^*,\mathrm{pow}}&=B_t-(q-1)\sigma_p\zeta_t^{\varphi_{\mathrm{pow}}^*}\\
    &=B_t-\frac{q-1}{q\sigma_p}\Pi_t
\end{align*}
Since $B_t$ and $\Pi_t$ are affine functions of $\cpricedrift_t$ with continuous deterministic coefficients, the process $\xi_t^{\varphi_{\mathrm{pow}}^*,\mathrm{pow}}$ is also affine in $\cpricedrift_t$ with continuous deterministic coefficients. Finally, the filtered state satisfies the linear SDE
\begin{equation*}
    \dd\cpricedrift_t=(\K\cpricedrift_t+\m)\dd t+\volatilitystate_t\dd\nu_t,
\end{equation*}
with deterministic continuous coefficients. Lemma \ref{affine_exponential_martingale_lemma} therefore implies that
\begin{equation*}
    \Xi_t^{\varphi_{\mathrm{pow}}^*,\mathrm{pow}}=\exp\left(\int_0^t\xi_s^{\varphi_{\mathrm{pow}}^*,\mathrm{pow}}\dd\nu_s-\frac12\int_0^t\bigl(\xi_s^{\varphi_{\mathrm{pow}}^*,\mathrm{pow}}\bigr)^2\dd s\right)
\end{equation*}
is a true martingale on $\T$. Thus $\varphi_{\mathrm{pow}}^*\in\mathcal A_{\mathrm{pow}}(\mathbb F^P)$.
\end{proof}


\subsubsection{Admissibility of the Exponential Candidate Strategy}

We finally verify admissibility of the exponential candidate feedback control. In the exponential case, the natural control variable is the absolute position in the risky asset rather than the relative position.

\begin{proposition}[Admissibility of the Exponential Candidate Strategy]\label{exp_admissibility}
The exponential candidate feedback control $\varphi_{\exp}^*$ belongs to $\mathcal A_{\exp}(\mathbb F^P)$.
\end{proposition}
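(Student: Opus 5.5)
For $U=U_{\exp}$ only conditions (ii) and (iv) of Definition \ref{admissible_strategy} need checking. No positivity of wealth and no log-moment bound are required. I will follow the pattern of Proposition \ref{pow_admissibility_prop}, with the simplification that the candidate control does not depend on wealth at all.

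By \eqref{exponential_candidate_strategy} and the notation of this subsection,
\begin{equation*}
    \varphi_{\exp}^*(t,\cpricedrift_t)=\frac{1}{p\sigma_p^2}\Pi_t,\qquad \Pi_t=\const^{\intercal}\cpricedrift_t+\sigma_p\volatilitystate_t^{\intercal}\bigl(2\quadraticcoef_t\cpricedrift_t+\linearcoef_t\bigr),
\end{equation*}
where $(\quadraticcoef,\linearcoef)$ is the unique $C^1$ solution of \eqref{exp_reduced_ode} from Proposition \ref{exp_hjb_reduction}. The functions $\quadraticcoef$ and $\linearcoef$ are continuous on $\T$. So is $\volatilitystate_t=(\sigma_p,K_t)^{\intercal}$, since $\kvar_t$ solves the Riccati equation in \eqref{kalman_bucy_equations} on $\T$. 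Hence $\Pi_t=u_t\Cdot\cpricedrift_t+v_t$ with continuous deterministic $u,v$.

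The state satisfies the linear SDE $\dd\cpricedrift_t=(\K\cpricedrift_t+\m)\dd t+\volatilitystate_t\dd\nu_t$ with deterministic initial condition $(p_0,\kdrift_0)$, so $\cpricedrift$ is a continuous Gaussian process. Its mean and covariance are continuous in $t$, and therefore $\sup_{t\in\T}\E|\cpricedrift_t|^2<\infty$. The closed-loop strategy is $\mathbb F^P$-progressively measurable, being a continuous function of $(t,\cpricedrift_t)$, and $\cpricedrift$ is $\mathbb F^P$-adapted.

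\textbf{Step (ii).} Since $\zeta_t^{\varphi_{\exp}^*}=\varphi_{\exp}^*=\Pi_t/(p\sigma_p^2)$, we have $\E[\Pi_t^2]\le C(1+\E|\cpricedrift_t|^2)$. The right-hand side is bounded uniformly on $\T$, so Tonelli's theorem gives $\E\int_0^T(\zeta_t^{\varphi_{\exp}^*})^2\dd t<\infty$.

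\textbf{Step (iv).} The kernel is
\begin{equation*}
    \xi_t^{\varphi_{\exp}^*,\exp}=B_t-p\sigma_p\varphi_{\exp}^*=B_t-\frac{1}{\sigma_p}\Pi_t,
\end{equation*}
which is again affine in $\cpricedrift_t$ with continuous deterministic coefficients. Lemma \ref{affine_exponential_martingale_lemma} applies with $X=\cpricedrift$, $W=\nu$, $A=\K$, $a=\m$ and $\beta=\volatilitystate$. It shows that $\Xi^{\varphi_{\exp}^*,\exp}$ is a true martingale on $\T$.

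There is essentially no hard step. The only points needing care are that the coefficients of the exponential ODE system are merely continuous, which is enough for Lemma \ref{affine_exponential_martingale_lemma}, and that the wealth process plays no role here. Because the control is wealth-independent, the closed-loop wealth $V_t=v_0+\int_0^t\varphi^*_s\dd P_s$ is well defined without further argument.
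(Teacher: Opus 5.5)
Your proposal is correct and follows essentially the same route as the paper. You get condition (ii) from a uniform-in-$t$ second-moment bound on the Gaussian signal $\Pi_t$ together with Tonelli, and condition (iv) by noting that $\xi_t^{\varphi_{\exp}^*,\exp}=B_t-\Pi_t/\sigma_p$ is affine in $\cpricedrift_t$ and applying Lemma \ref{affine_exponential_martingale_lemma} with $X=\cpricedrift$, $W=\nu$; your remarks on progressive measurability and on wealth-independence of the control are minor points the paper leaves implicit.
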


\begin{proof}
By \eqref{exponential_candidate_strategy}, the exponential candidate feedback control is
\begin{equation*}
    \varphi_{\exp}^*(t,\cpricedrift_t)=\frac{1}{p\sigma_p^2}\Pi_t.
\end{equation*}
As in the power case, we use the continuity of $t\mapsto \E[\Pi_t^2]$
on the compact interval $\T$ and Tonelli's theorem to conclude that
\begin{equation*}
    \E\!\left[\int_0^T\bigl(\varphi_{\exp}^*(t,\cpricedrift_t)\bigr)^2\dd t\right]=\frac{1}{p^2\sigma_p^4}\int_0^T\E[\Pi_t^2]\dd t<\infty.
\end{equation*}
Thus condition (ii) of Definition \ref{admissible_strategy} is satisfied.

\medskip
Towards verifying the true-martingale condition, note that by Definition \ref{admissible_strategy}, the relevant stochastic exponential is driven by
\begin{align*}
    \xi_t^{\varphi_{\exp}^*,\exp}&=B_t-p\sigma_p\varphi_{\exp}^*(t,\cpricedrift_t)\\
    &=B_t-\frac{1}{\sigma_p}\Pi_t.
\end{align*}
Thus by the same reasoning as in the proof of the true martingale condition for the power candidate strategy, Lemma \ref{affine_exponential_martingale_lemma} implies that
\begin{equation*}
    \Xi_t^{\varphi_{\exp}^*,\exp}=\exp\left(\int_0^t\xi_s^{\varphi_{\exp}^*,\exp}\dd\nu_s-\frac12\int_0^t\bigl(\xi_s^{\varphi_{\exp}^*,\exp}\bigr)^2\dd s\right)
\end{equation*}
is a true martingale on $\T$. Hence condition (iv) of Definition \ref{admissible_strategy} is satisfied, and therefore $\varphi_{\exp}^*\in\mathcal A_{\exp}(\mathbb F^P)$.
\end{proof}


\subsection{Verification}

We now verify that the candidate value functions and feedback controls derived in Section 4 solve the partial-information portfolio optimization problem. The logarithmic case is handled by the classical localization argument, using the finite expected running supremum of logarithmic utility built into the admissible class. The power and exponential cases are handled by a multiplicative supermartingale argument, using the true-martingale condition in the corresponding admissible classes.

\medskip
For each admissible control $\varphi$, write 
\begin{equation*}
    X_t^\varphi:=(V_t^\varphi,P_t,\kdrift_t).
\end{equation*} 
Throughout this section, conditional expectations of the form $\E\!\left[\,\cdot\,\middle|X_t^\varphi=(\cwealth,\cprice,\cdrift)\right]$ are understood in terms of a fixed Borel version of the corresponding regular conditional expectation.


\subsubsection*{Logarithmic Utility}

We first treat logarithmic utility.

\begin{proposition}\label{log_value_process_integrability_prop}
Let $\psi_{\log}$ be the logarithmic candidate value function defined in Proposition \ref{log_hjb_reduction}, and let $\varphi\in\mathcal A_{\log}(\mathbb F^P)$. Then
\begin{equation*}
    \E\!\left[\sup_{t\in\T}\bigl|\psi_{\log}(t,V_t^\varphi,P_t,\kdrift_t)\bigr|\right]<\infty.
\end{equation*}
\end{proposition}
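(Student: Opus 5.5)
The plan is to split $\psi_{\log}(t,V_t^\varphi,P_t,\kdrift_t)=\log V_t^\varphi+\quadraticstate(t,\cpricedrift_t)$ with $\cpricedrift_t=(P_t,\kdrift_t)^{\intercal}$, and bound the two pieces separately. By the triangle inequality,
\begin{equation*}
    \sup_{t\in\T}\bigl|\psi_{\log}(t,V_t^\varphi,P_t,\kdrift_t)\bigr|\le\sup_{t\in\T}\bigl|\log V_t^\varphi\bigr|+\sup_{t\in\T}\bigl|\quadraticstate(t,\cpricedrift_t)\bigr|.
\end{equation*}
The first term has finite expectation by condition (iii) of Definition \ref{admissible_strategy}, since $\varphi\in\mathcal A_{\log}(\mathbb F^P)$. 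The first term is the only place where the control enters. The second term does not depend on $\varphi$ at all, because $P$ and $\kdrift$ evolve by \eqref{controlled_state_dynamics} independently of the strategy.

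For the quadratic term, I will use that $\quadraticcoef,\linearcoef,\quadraticshift$ are the continuous (indeed $C^1$) solutions from Proposition \ref{log_hjb_reduction}, so they are bounded on the compact interval $\T$. This gives a constant $C$ with
\begin{equation*}
    |\quadraticstate(t,\cpricedrift)|\le C(1+|\cpricedrift|^2)
\end{equation*}
for all $(t,\cpricedrift)\in\T\times\R^3$. It therefore suffices to show $\E[\sup_{t\in\T}|\cpricedrift_t|^2]<\infty$.

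The process $\cpricedrift$ solves the linear SDE $\dd\cpricedrift_t=(\K\cpricedrift_t+\m)\dd t+\volatilitystate_t\dd\nu_t$. Here $\K$ and $\m$ are constant, and $\volatilitystate$ is continuous and deterministic, since $K_t$ is affine in the solution $\kvar_t$ of the Riccati equation in \eqref{kalman_bucy_equations}. The initial value $\cpricedrift_0=(p_0,f_0,s_0)^{\intercal}$ is deterministic. Writing the variation of constants formula
\begin{equation*}
    \cpricedrift_t=e^{\K t}\cpricedrift_0+\int_0^te^{\K(t-u)}\m\dd u+e^{\K t}\int_0^te^{-\K u}\volatilitystate_u\dd\nu_u,
\end{equation*}
the first two terms are bounded on $\T$. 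For the stochastic term, $\sup_t\|e^{\K t}\|<\infty$, and Doob's $L^2$ inequality (or BDG) gives
\begin{equation*}
    \E\Bigl[\sup_{t\in\T}\Bigl|\int_0^te^{-\K u}\volatilitystate_u\dd\nu_u\Bigr|^2\Bigr]\le4\int_0^T\|e^{-\K u}\volatilitystate_u\|^2\dd u<\infty.
\end{equation*}
Alternatively, one can apply Itô's formula to $|\cpricedrift_t|^2$, use BDG on the martingale part, and close with Gronwall. The variation of constants route is cleaner.

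Combining the two bounds yields the claim. I expect no step to be a real obstacle. The main point to handle carefully is that the supremum sits inside the expectation, so pointwise Gaussian moment bounds like those in Proposition \ref{log_admissibility_prop} are not enough. That is why I use Doob's maximal inequality on the explicit Gaussian martingale rather than the continuity of $t\mapsto\E|\cpricedrift_t|^2$.
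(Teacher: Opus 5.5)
Your proof is correct and uses the same decomposition as the paper. The first summand $\sup_{t\in\T}|\log V_t^\varphi|$ is controlled by admissibility condition (iii), and the quadratic part is controlled by the bound $C(1+|\cpricedrift|^2)$. Where the paper simply appeals to $\cpricedrift$ being a continuous Gaussian process to get $\E[\sup_{t\in\T}|\cpricedrift_t|^2]<\infty$, you give an explicit justification via variation of constants and Doob's $L^2$ maximal inequality, which makes that step more transparent.
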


\begin{proof}
Let $\cpricedrift_t:=(P_t,\kdrift_t)^{\intercal}$. Since $\psi_{\log}(t,V_t^\varphi,\cpricedrift_t)=\log V_t^\varphi+\cpricedrift^{\intercal}\quadraticcoef_t\cpricedrift+\linearcoef_t\Cdot\cpricedrift+\quadraticshift(t)$, there exists a constant $C>0$ such that
\begin{equation*}
    \sup_{t\in\T}\bigl|\psi_{\log}(t,V_t^\varphi,\cpricedrift_t)\bigr|\leq\sup_{t\in\T}\bigl|\log V_t^\varphi\bigr|+C\left(1+\sup_{t\in\T}|\cpricedrift_t|^2\right).
\end{equation*}
The statement then follows after taking expectations applying Definition \ref{admissible_strategy} to the first summand on the right, and using that $\cpricedrift_t$ is a continuous Gaussian process for the second summand.
\end{proof}

\begin{lemma}\label{log_verification_limit_lemma}
Let $\varphi\in\mathcal A_{\log}(\mathbb F^P)$. Then there exists a localizing sequence $(\tau_n)_n$ with $\tau_n\uparrow T$ a.s. such that
\begin{equation*}
    \E\!\left[\psi_{\log}(\tau_n,X_{\tau_n}^\varphi)\middle|X_t^\varphi=(\cwealth,\cprice,\cdrift)\right]\to\E\!\left[\log V_T^\varphi\middle|X_t^\varphi=(\cwealth,\cprice,\cdrift)\right].
\end{equation*}
\end{lemma}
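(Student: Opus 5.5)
The plan is a dominated convergence argument driven by Proposition \ref{log_value_process_integrability_prop}. First fix the localizing sequence. Along an admissible $\varphi$, Itô's formula applied to $\psi_{\log}(s,X_s^\varphi)$ on $[t,T]$ produces a local martingale part
\begin{equation*}
    \int_t^{\Cdot}\Bigl(\sigma_p\frac{\varphi_s}{V_s^\varphi}+\volatilitystate_s^{\intercal}\bigl(2\quadraticcoef_s\cpricedrift_s+\linearcoef_s\bigr)\Bigr)\dd\nu_s .
\end{equation*}
We take
\begin{equation*}
    \tau_n:=\inf\Bigl\{s\ge t:\int_t^s\Bigl(\tfrac{\varphi_u^2}{(V_u^\varphi)^2}+|\cpricedrift_u|^2\Bigr)\dd u\ge n \ \text{or}\ |\log V_s^\varphi|+|\cpricedrift_s|\ge n\Bigr\}\wedge T.
\end{equation*}
This is the sequence the verification argument needs, since the stopped stochastic integral is then a true martingale. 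Admissibility condition (i) gives $V^\varphi>0$, so $\log V^\varphi$ is continuous on $\T$. Condition (ii) and the continuity of the Gaussian process $\cpricedrift$ make the integral finite a.s. Together these give $\tau_n\uparrow T$ a.s., and in fact $\tau_n=T$ for all large $n$ on almost every path.

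Next we pass to the limit pathwise. The map $(s,\cwealth,\cpricedrift)\mapsto\psi_{\log}(s,\cwealth,\cpricedrift)=\log\cwealth+\quadraticstate(s,\cpricedrift)$ is continuous on $\T\times(0,\infty)\times\R^3$, because $\quadraticcoef,\linearcoef,\quadraticshift$ are $C^1$ by Proposition \ref{log_hjb_reduction}. The paths of $X^\varphi$ are continuous with $V^\varphi>0$. Hence $\psi_{\log}(\tau_n,X_{\tau_n}^\varphi)\to\psi_{\log}(T,X_T^\varphi)=\log V_T^\varphi$ a.s., using the terminal conditions $\quadraticcoef(T)=0$, $\linearcoef(T)=0$, $\quadraticshift(T)=0$.

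For domination, note that $|\psi_{\log}(\tau_n,X_{\tau_n}^\varphi)|\le\sup_{s\in\T}|\psi_{\log}(s,X_s^\varphi)|$. By Proposition \ref{log_value_process_integrability_prop} this bound is integrable and does not depend on $n$. Dominated convergence for conditional expectations then gives a.s. convergence of $\E[\psi_{\log}(\tau_n,X_{\tau_n}^\varphi)\mid\F_t^P]$ to $\E[\log V_T^\varphi\mid\F_t^P]$.

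The main obstacle is transferring this to the conditioning on $\{X_t^\varphi=(\cwealth,\cprice,\cdrift)\}$. A general price-adapted $\varphi$ is path dependent, so $X^\varphi$ need not be Markov. I would handle this through the fixed Borel regular conditional distribution given $X_t^\varphi$. The conditional dominated convergence theorem holds for regular conditional probabilities, so applying it under the kernel $\prob(\,\cdot\mid X_t^\varphi=x)$ yields the convergence for almost every $x$ in the law of $X_t^\varphi$. The integrability needed under this kernel follows by disintegration: $\E[\sup_s|\psi_{\log}|]<\infty$ implies that the conditional expectation of $\sup_s|\psi_{\log}|$ is finite for almost every $x$. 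The statement is to be read in this a.e. sense, consistent with the convention fixed before this subsection.
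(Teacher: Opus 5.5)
Your proposal is correct and follows the paper's proof. Both arguments combine pathwise convergence $\psi_{\log}(\tau_n,X_{\tau_n}^\varphi)\to\log V_T^\varphi$ (from continuous paths and the terminal condition), domination by $\sup_{s\in\T}|\psi_{\log}(s,X_s^\varphi)|$ via Proposition \ref{log_value_process_integrability_prop}, and conditional dominated convergence. Your explicit $\tau_n$ and your a.e.\ reading of the conditioning on $X_t^\varphi=(\cwealth,\cprice,\cdrift)$ spell out details the paper leaves implicit; the a.e.\ statement also follows directly from conditional dominated convergence with respect to $\sigma(X_t^\varphi)$, without the detour through the kernel.
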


\begin{proof}
Applying It\^o's formula to $\psi_{\log}(s,X_s^\varphi)$ gives
\begin{equation*}
    \psi_{\log}(s,X_s^\varphi)=\psi_{\log}(0,X_0^\varphi)+\int_0^s\Bigl(\partial_r\psi_{\log}+\mathfrak L\psi_{\log}(\;\Cdot\;,\varphi_r)\Bigr)(r,X_r^\varphi)\dd r+M_s^\varphi,
\end{equation*}
where $M^\varphi$ is a continuous local martingale. Let $(\tau_n)_n$ be a localizing sequence for $M^\varphi$ such that $\tau_n\uparrow T$ a.s. Since $X^\varphi$ has continuous paths and $\psi_{\log}(T,\cwealth,\cprice,\cdrift)=\log \cwealth$, we have
\begin{equation*}
    \psi_{\log}(\tau_n,X_{\tau_n}^\varphi)\to\log V_T^\varphi\qquad\text{a.s.}
\end{equation*}
Moreover,
\begin{equation*}
    \bigl|\psi_{\log}(\tau_n,X_{\tau_n}^\varphi)\bigr|\leq\sup_{s\in\T}\bigl|\psi_{\log}(s,X_s^\varphi)\bigr|,
\end{equation*}
and the right-hand side is integrable by Proposition \ref{log_value_process_integrability_prop}. Conditional dominated convergence therefore gives the claimed limit.
\end{proof}

\begin{lemma}\label{log_candidate_martingale_prop}
The local martingale term in It\^o's formula for $\psi_{\log}(t,V_t^{\varphi_{\log}^*},P_t,\kdrift_t)$ is a true martingale on $\T$.
\end{lemma}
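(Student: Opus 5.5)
We will compute the local martingale term explicitly along the closed-loop dynamics and show that its integrand is square-integrable in $L^2(\dd\prob\otimes\dd t)$. Then the stochastic integral is an $L^2$-bounded martingale, in particular a true martingale on $\T$. Write $\cpricedrift_t=(P_t,\kdrift_t)^{\intercal}$ and $V_t=V_t^{\varphi_{\log}^*}$.

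\medskip
The first step is to isolate the $\dd\nu$-part in It\^o's formula. By \eqref{controlled_state_dynamics}, the state $(V,\cpricedrift)$ is driven only by the innovation $\nu$. The wealth has diffusion coefficient $\sigma_p\varphi_t$, and $\cpricedrift$ has diffusion vector $\volatilitystate_t$, since $\dd\cpricedrift_t=(\K\cpricedrift_t+\m)\dd t+\volatilitystate_t\dd\nu_t$. We have $\partial_\cwealth\psi_{\log}=1/\cwealth$ and $\nabla_{\cpricedrift}\psi_{\log}=2\quadraticcoef_t\cpricedrift+\linearcoef_t$. Hence the local martingale term is
\begin{equation*}
    M_t=\int_0^t\left(\frac{\sigma_p\varphi_{\log}^*(s,V_s,\cpricedrift_s)}{V_s}+\volatilitystate_s^{\intercal}\bigl(2\quadraticcoef_s\cpricedrift_s+\linearcoef_s\bigr)\right)\dd\nu_s
    =\int_0^t\left(\frac{\pi_s}{\sigma_p}+B_s\right)\dd\nu_s,
\end{equation*}
where we use $\varphi_{\log}^*=\cwealth\pi/\sigma_p^2$ from \eqref{log_candidate_strategy}. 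The wealth variable cancels, which is the key simplification here. The integrand $\pi_s/\sigma_p+B_s$ is an affine function of $\cpricedrift_s$ with continuous deterministic coefficients $\const$, $\quadraticcoef$, $\linearcoef$ and $\volatilitystate$. These coefficients are continuous by Proposition \ref{log_hjb_reduction} and by the differentiability of $\kvar$.

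\medskip
The second step is the moment bound. Since $\cpricedrift$ solves a linear SDE with continuous deterministic coefficients and deterministic initial data, it is a continuous Gaussian process. (One may also take Gaussian initial data, with $\kdrift_0$ deterministic as in \eqref{kalman_bucy_equations}.) Its mean and covariance are continuous in $t$, so $\sup_{t\in\T}\E|\cpricedrift_t|^2<\infty$. This can be justified as in the admissibility proofs, or directly by the It\^o--Gronwall argument used in Lemma \ref{affine_exponential_martingale_lemma} with $\xi\equiv0$. It follows that
\begin{equation*}
    \E\!\left[\int_0^T\Bigl(\frac{\pi_s}{\sigma_p}+B_s\Bigr)^2\dd s\right]\le C\int_0^T\bigl(1+\E|\cpricedrift_s|^2\bigr)\dd s<\infty.
\end{equation*}
By the It\^o isometry, $M$ is then a square-integrable true martingale on $\T$.

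\medskip
There is no serious obstacle. The only point needing care is the bookkeeping in the first step: the cross terms of $\mathfrak L$ must not produce additional $\dd\nu$ contributions. They do not, because $\partial_{\cwealth\cprice}\psi_{\log}=0$ and $\nabla\partial_\cwealth\psi_{\log}=0$. We must also confirm that the closed-loop wealth is strictly positive, so that $\log V$ and the It\^o formula are legitimate. This was established in Proposition \ref{log_admissibility_prop}.
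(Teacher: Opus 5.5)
Your proof is correct, but it takes a different route from the paper's.

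\textbf{The paper's route.} The paper never computes the integrand of $M$. Since $\varphi_{\log}^*$ kills the drift in It\^o's formula, $M_t=\psi_{\log}(t,X_t^*)-\psi_{\log}(0,X_0^*)$. The paper bounds $\E[\sup_{t\in\T}|M_t|]$ using Proposition~\ref{log_value_process_integrability_prop}. That proposition rests on admissibility condition (iii) for $\varphi_{\log}^*$ and on a running-supremum moment bound $\E[\sup_{t}|\cpricedrift_t|^2]<\infty$ for the Gaussian state. The paper then concludes by conditional dominated convergence along a localizing sequence: a local martingale with integrable running supremum is a true martingale.

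\textbf{Your route.} You identify the integrand explicitly as $\pi_s/\sigma_p+B_s$; the wealth cancels because the log strategy is linear in wealth. This integrand is affine in the Gaussian state with continuous deterministic coefficients, so the It\^o isometry gives an $L^2$-bounded martingale.

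\textbf{What each buys.} Your argument is more self-contained and gives a stronger conclusion (square integrability). It needs only $\int_0^T\E|\cpricedrift_t|^2\,\dd t<\infty$ rather than a running-supremum bound. It also never uses the HJB equation or condition (iii). In fact, the same computation combined with condition (ii) of Definition~\ref{admissible_strategy} shows more. The local martingale term is a true martingale for every $\varphi\in\mathcal A_{\log}(\mathbb F^P)$, since its integrand is $\sigma_p\varphi_s/V_s^\varphi+B_s$. The paper's argument avoids computing the diffusion coefficient. It also reuses Proposition~\ref{log_value_process_integrability_prop}, which the paper needs anyway for Lemma~\ref{log_verification_limit_lemma}.

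\textbf{A minor misstatement.} You single out the cross terms of $\mathfrak L$ as the point needing care, but this concern is not actually an issue. Second-order and mixed derivatives only ever enter the $\dd t$ part of It\^o's formula. The $\dd\nu$ part is always the gradient of $\psi_{\log}$ paired with the diffusion vector $(\sigma_p\varphi_t,\volatilitystate_t)$. So the vanishing of $\partial_{\cwealth\cprice}\psi_{\log}$ and $\nabla\partial_\cwealth\psi_{\log}$ plays no role here.
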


\begin{proof}
Let $X_t^*:=(V_t^{\varphi_{\log}^*},P_t,\kdrift_t)$. Since $\psi_{\log}$ satisfies the HJB equation and $\varphi_{\log}^*$ attains the pointwise supremum, Itô's formula gives
\begin{equation*}
    \psi_{\log}(t,X_t^*)=\psi_{\log}(0,X_0^*)+M_t
\end{equation*}
for some continuous local martingale $M$. By Proposition \ref{log_admissibility_prop}, we have $\varphi_{\log}^*\in\mathcal A_{\log}(\mathbb F^P)$, and hence Proposition \ref{log_value_process_integrability_prop} implies $\E\!\left[\sup_{t\in\T}\bigl|\psi_{\log}(t,X_t^*)\bigr|\right]<\infty$. Consequently,
\begin{equation*}
    \E\!\left[\sup_{t\in\T}|M_t|\right]\leq\bigl|\psi_{\log}(0,X_0^*)\bigr|+\E\!\left[\sup_{t\in\T}\bigl|\psi_{\log}(t,X_t^*)\bigr|\right]<\infty.
\end{equation*}

Let $(\tau_n)_n$ be a localizing sequence for $M$. Then for every $0\leq s\leq t\leq T$,
\begin{equation*}
    \E\!\left[M_{t\wedge\tau_n}\middle|\F_s^P\right]=M_{s\wedge\tau_n}\qquad\text{a.s.}
\end{equation*}
Using the integrability of the dominating process $\sup_{r\in\T}|M_r|$, conditional dominated convergence yields
\begin{equation*}
    \E\!\left[M_t\middle|\F_s^P\right]=M_s\qquad\text{a.s.}
\end{equation*}
Thus $M$ is a true martingale on $\T$.
\end{proof}


\subsubsection*{Power and Exponential Utilities}

Before stating and proving the verification theorem, we prove two further auxiliary results for the power and exponential cases. 

\begin{lemma}\label{pow_exp_value_process_dynamics_lemma}
Fix $U\in\{U_{\mathrm{pow}},U_{\exp}\}$ and let $\psi_U$ be the corresponding candidate value function from Proposition \ref{pow_hjb_reduction} or Proposition \ref{exp_hjb_reduction}. Let $\varphi\in\mathcal A_U(\mathbb F^P)$. Then
\begin{equation*}
    \dd\psi_U(t,X_t^\varphi)=\psi_U(t,X_t^\varphi)\left(a_t^{U,\varphi}\dd t+\xi_t^{U,\varphi}\dd\nu_t\right),
\end{equation*}
where $\xi_t^{U,\varphi}$ is as defined in Definition \ref{admissible_strategy} and
\begin{equation*}
    a_t^{U,\varphi}:=\begin{cases}
        \frac{q(q-1)}{2}\sigma_p^2\left(\frac{\varphi_t}{V_t^\varphi}-\frac{\varphi_{\mathrm{pow}}^*(t,V_t^\varphi,\cpricedrift_t)}{V_t^\varphi}\right)^2,&\text{for }U=U_\mathrm{pow}\\
        \frac{p^2}{2}\sigma_p^2\left(\varphi_t-\varphi_{\exp}^*(t,\cpricedrift_t)\right)^2, &\text{for }U=U_{\exp}.
    \end{cases}
\end{equation*}
In particular, $a_t^{U,\varphi}\geq 0$ a.s. for all $t\in\T$.
\end{lemma}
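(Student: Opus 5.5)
Apply It\^o's formula to $\psi_U(t,X_t^\varphi)$ and use the multiplicative structure of the ansatz to write both the drift and the diffusion coefficient as multiples of $\psi_U$ itself. The key observation is that $\psi_U$ is a product of a function of wealth and $\exp(\quadraticstate(t,\cpricedrift))$. Hence
\begin{equation*}
\frac{\dd\psi_U(t,X_t^\varphi)}{\psi_U(t,X_t^\varphi)}=\frac{(\partial_t\psi_U+\mathfrak L\psi_U(\,\Cdot\,,\varphi_t))(t,X_t^\varphi)}{\psi_U(t,X_t^\varphi)}\dd t+\frac{\sigma_p\varphi_t\partial_\cwealth\psi_U+\volatilitystate_t^{\intercal}\nabla_{\cpricedrift}\psi_U}{\psi_U}\dd\nu_t,
\end{equation*}
where the diffusion coefficient comes from $\dd V=\varphi\,\dd P=\varphi(\pi_t\dd t+\sigma_p\dd\nu_t)$ and $\dd\cpricedrift_t=(\K\cpricedrift_t+\m)\dd t+\volatilitystate_t\dd\nu_t$. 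Division by $\psi_U$ is legitimate: $\psi_U$ never vanishes. For power utility $V^\varphi>0$ by admissibility (i), and $1-q<0$ makes $\psi_{\mathrm{pow}}<0$; for exponential utility $\psi_{\exp}<0$ everywhere.

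\textbf{Diffusion term.} In the power case, $\partial_\cwealth\psi/\psi=(1-q)/\cwealth$ and $\nabla_{\cpricedrift}\psi/\psi=\nabla_{\cpricedrift}\quadraticstate=2\quadraticcoef_t\cpricedrift+\linearcoef_t$. The coefficient of $\dd\nu_t$ is therefore $B_t-(q-1)\sigma_p\varphi_t/V_t^\varphi=\xi_t^{\varphi,\mathrm{pow}}$. In the exponential case, $\partial_\cwealth\psi/\psi=-p$, giving $B_t-p\sigma_p\varphi_t=\xi_t^{\varphi,\exp}$.

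\textbf{Drift term.} Since $\mathfrak L\psi_U$ is quadratic in the control $\varphi$, expand it around the maximizer:
\begin{equation*}
(\partial_t+\mathfrak L)\psi_U(\varphi)=(\partial_t+\mathfrak L)\psi_U(\varphi^*)+\tfrac12\sigma_p^2\,\partial_{\cwealth\cwealth}\psi_U\,(\varphi-\varphi^*)^2 .
\end{equation*}
This holds because the first-order term vanishes at the critical point $\varphi^*$ from \eqref{candidate_strategy}. By Propositions \ref{pow_hjb_reduction} and \ref{exp_hjb_reduction}, $(\partial_t+\mathfrak L)\psi_U(\varphi^*)=0$. Dividing by $\psi_U$, it remains to compute $\partial_{\cwealth\cwealth}\psi_U/\psi_U$:
\begin{itemize}
\item Power utility: $\partial_{\cwealth\cwealth}\psi/\psi=-q(1-q)/\cwealth^2=q(q-1)/\cwealth^2$, giving $a_t=\tfrac{q(q-1)}{2}\sigma_p^2(\varphi_t/V_t-\varphi^*_{\mathrm{pow}}/V_t)^2$.
\item Exponential utility: $\partial_{\cwealth\cwealth}\psi/\psi=p^2$, giving $\tfrac{p^2}{2}\sigma_p^2(\varphi_t-\varphi^*_{\exp})^2$.
\end{itemize}
Nonnegativity of $a_t^{U,\varphi}$ is then immediate from $q>1$ and $p>0$.

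\textbf{Main obstacle.} The step needing care is bookkeeping rather than new ideas. One must check that the sign flip from dividing by the negative $\psi_U$ is handled consistently. One must also confirm that $\varphi^*$ evaluated at the current state $(t,V_t^\varphi,\cpricedrift_t)$ is exactly the pointwise HJB maximizer. This holds because the HJB identity in Propositions \ref{pow_hjb_reduction} and \ref{exp_hjb_reduction} is pointwise in $(t,\cwealth,\cpricedrift)$. Finally, It\^o's formula applies because $\psi_U$ is $C^{1,2}$ on $(0,\infty)\times\R^3$ (respectively $\R^4$) and $V^\varphi$ stays in the domain.
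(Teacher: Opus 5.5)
Your proposal is correct and follows essentially the same route as the paper: apply It\^o's formula along the controlled dynamics, then complete the square in the control around the pointwise maximizer $\varphi_U^*$ so that the HJB identity eliminates the $\varphi_U^*$-term, and finally use $\partial_{\cwealth\cwealth}\psi_{\mathrm{pow}}=q(q-1)\psi_{\mathrm{pow}}/\cwealth^2$ and $\partial_{\cwealth\cwealth}\psi_{\exp}=p^2\psi_{\exp}$. Your explicit computation of the $\dd\nu_t$-coefficient as $\psi_U\,\xi_t^{U,\varphi}$ fills in a step that the paper only asserts.
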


\begin{proof}
Applying Itô's formula to
\begin{equation*}
    \psi_{\mathrm{pow}}(t,\cwealth,\cpricedrift)=\frac{\cwealth^{1-q}}{1-q}\exp\!\bigl(\quadraticstate(t,\cpricedrift)\bigr),\qquad 
    \psi_{\exp}(t,\cwealth,\cpricedrift)=-\exp\!\bigl(-p\cwealth+\quadraticstate(t,\cpricedrift)\bigr)
\end{equation*}
along the controlled dynamics gives that for both utility specifications
\begin{equation}\label{candidate_value_expanded}
    \dd\psi_U(t,X_t^\varphi)=\Bigl(\partial_t\psi_U+\mathfrak L\psi_U(\;\Cdot\;,\varphi_t)\Bigr)(t,X_t^\varphi)\dd t+\psi_U(t,X_t^\varphi)\xi_t^{U,\varphi}\dd\nu_t.
\end{equation}
Since $\varphi_U^*$ attains the pointwise supremum in the HJB equation, completing the square in the control variable gives
\begin{align*}
    \partial_t\psi_U+\mathfrak L\psi_U(\;\Cdot\;,\varphi_t)&=\partial_t\psi_U+\mathfrak L\psi_U(\;\Cdot\;,\varphi_U^*)+\frac12\sigma_p^2\partial_{\cwealth\cwealth}\psi_U\bigl(\varphi_t-\varphi_U^*\bigr)^2\\
    &=\frac12\sigma_p^2\partial_{\cwealth\cwealth}\psi_U\bigl(\varphi_t-\varphi_U^*\bigr)^2.
\end{align*}
Using
\begin{equation*}
    \partial_{\cwealth\cwealth}\psi_{\mathrm{pow}}=q(q-1)\frac{\psi_{\mathrm{pow}}}{\cwealth^2},\qquad \partial_{\cwealth\cwealth}\psi_{\exp}=p^2\psi_{\exp},
\end{equation*}
respectively, we obtain
\begin{equation*}
    \partial_t\psi_U+\mathfrak L\psi_U(\;\Cdot\;,\varphi_t)=\psi_U\,a_t^{U,\varphi}.
\end{equation*}
The statement follows after plugging this expression into \eqref{candidate_value_expanded} above.
\end{proof}

\begin{proposition}\label{pow_exp_supermartingale_prop}
Fix $U\in\{U_{\mathrm{pow}},U_{\exp}\}$ and let $\varphi\in\mathcal A_U(\mathbb F^P)$. Then for every $0\leq t\leq u\leq T$,
\begin{equation*}
    \E\!\left[\psi_U(u,X_u^\varphi)\middle|\F_t^P\right]\leq\psi_U(t,X_t^\varphi)\qquad\text{a.s.},
\end{equation*}
with equality holding a.s. in the case $\varphi=\varphi_U^*$.
\end{proposition}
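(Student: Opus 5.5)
The plan is to solve the multiplicative SDE in Lemma \ref{pow_exp_value_process_dynamics_lemma} explicitly and read off the supermartingale property from the sign of $\psi_U$ together with the true-martingale condition (iv) of Definition \ref{admissible_strategy}. By that lemma, $Y_t:=\psi_U(t,X_t^\varphi)$ satisfies the linear SDE $\dd Y_t=Y_t(a_t^{U,\varphi}\dd t+\xi_t^{U,\varphi}\dd\nu_t)$. Since this is a linear SDE with pathwise integrable coefficients ($a$ is continuous in the state and $\xi$ is continuous, so $\int_0^T\xi_s^2\dd s<\infty$ a.s.), uniqueness gives
\begin{equation*}
    \psi_U(u,X_u^\varphi)=\psi_U(t,X_t^\varphi)\,\exp\!\left(\int_t^u a_s^{U,\varphi}\dd s\right)\frac{\Xi_u^{\varphi,U}}{\Xi_t^{\varphi,U}},\qquad 0\le t\le u\le T.
\end{equation*}
For the continuity of $a$, note that $V^\varphi>0$ in the power case, so $\varphi_t/V_t^\varphi$ is well defined.

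The key sign observation is that $\psi_U<0$ everywhere. For $U_{\mathrm{pow}}$ this holds because $q>1$ and $\cwealth>0$, and for $U_{\exp}$ it is immediate. Since $a^{U,\varphi}\ge 0$, we have $\exp(\int_t^u a_s\dd s)\ge 1$. Because $\psi_U(t,X_t^\varphi)<0$ is $\F_t^P$-measurable, this yields the pointwise bound
\begin{equation*}
    \psi_U(u,X_u^\varphi)\le \psi_U(t,X_t^\varphi)\,\frac{\Xi_u^{\varphi,U}}{\Xi_t^{\varphi,U}}.
\end{equation*}

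Next I take conditional expectations. The right-hand side is the product of the nonpositive $\F_t^P$-measurable factor $\psi_U(t,X_t^\varphi)$ and the nonnegative ratio $\Xi_u/\Xi_t$. Condition (iv) says $\Xi^{\varphi,U}$ is a true martingale, so $\E[\Xi_u/\Xi_t\mid\F_t^P]=1$. Taking conditional expectations therefore gives $\E[\psi_U(u,X_u^\varphi)\mid\F_t^P]\le\psi_U(t,X_t^\varphi)$.

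The main obstacle is justifying the conditional expectation of a product when $\psi_U(t,X_t^\varphi)$ may not be integrable, so that "pulling out what is known" is not immediately licensed. I would handle this with generalized conditional expectations for nonpositive (extended-valued) random variables: for a nonpositive measurable factor $Z$ and a nonnegative integrable $G$, the identity $\E[ZG\mid\F_t]=Z\E[G\mid\F_t]$ holds. Equivalently, one can truncate, applying the argument on the events $\{|\psi_U(t,X_t^\varphi)|\le n\}\in\F_t^P$ and letting $n\to\infty$ by monotone convergence. The same truncation shows that $\psi_U(u,X_u^\varphi)$ has a well-defined (possibly $-\infty$-valued) conditional expectation, since it is nonpositive.

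For the equality case $\varphi=\varphi_U^*$, we have $\varphi_U^*\in\mathcal A_U(\mathbb F^P)$ by Propositions \ref{pow_admissibility_prop} and \ref{exp_admissibility}, so the preceding argument applies. Moreover $a^{U,\varphi^*}\equiv0$ by its definition in Lemma \ref{pow_exp_value_process_dynamics_lemma}, so the inequality in the pointwise bound becomes an identity. The martingale property of $\Xi$ then gives equality a.s.
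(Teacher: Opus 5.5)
Your proposal is correct and is essentially the paper's argument. The paper applies It\^o's product rule to $Y_t=\psi_U(t,X_t^\varphi)/\Xi_t^{\varphi,U}$ to get $\dd Y_t=a_t^{U,\varphi}Y_t\dd t$, which is your formula $\psi_U(u,X_u^\varphi)=\psi_U(t,X_t^\varphi)e^{\int_t^u a_s^{U,\varphi}\dd s}\,\Xi_u^{\varphi,U}/\Xi_t^{\varphi,U}$ in differential form, and then uses the same sign and true-martingale step. Your care in pulling the possibly non-integrable, nonpositive factor $\psi_U(t,X_t^\varphi)$ out of the conditional expectation is a refinement the paper omits.

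One small fix: $\varphi$ is only progressively measurable, so $a^{U,\varphi}$ and $\xi^{U,\varphi}$ need not be continuous. The a.s.\ finiteness of $\int_0^T\bigl(a_s^{U,\varphi}+(\xi_s^{U,\varphi})^2\bigr)\dd s$ should instead come from condition (ii) of Definition \ref{admissible_strategy}, together with the pathwise continuity of $B$ and of $\varphi_U^*$.
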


\begin{proof}
By the definition of $\mathcal A_U(\mathbb F^P)$ in the power and exponential cases, $\Xi^{U,\varphi}$ is a true martingale on $\T$.

Set
\begin{equation*}
    Y_t^{U,\varphi}:=\frac{\psi_U(t,X_t^\varphi)}{\Xi_t^{U,\varphi}}.
\end{equation*}
Since $\dd\Xi_t^{U,\varphi}=\Xi_t^{U,\varphi}\xi_t^{U,\varphi}\dd\nu_t$,
Itô's formula gives
\begin{equation*}
    \dd\bigl((\Xi_t^{U,\varphi})^{-1}\bigr)=-(\Xi_t^{U,\varphi})^{-1}\xi_t^{U,\varphi}\dd\nu_t+(\Xi_t^{U,\varphi})^{-1}\bigl(\xi_t^{U,\varphi}\bigr)^2\dd t.
\end{equation*}
Combining this identity with Lemma \ref{pow_exp_value_process_dynamics_lemma} and using the product rule, the stochastic terms cancel and we obtain
\begin{equation*}
    \dd Y_t^{U,\varphi}=a_t^{U,\varphi}Y_t^{U,\varphi}\dd t.
\end{equation*}
Since $\psi_U<0$ and $\Xi^{U,\varphi}>0$, we have $Y_t^{U,\varphi}<0$. Because $a_t^{U,\varphi}\geq0$, it follows that $\dd Y_t^{U,\varphi}\leq0$. Hence $Y^{U,\varphi}$ has nonincreasing paths. Therefore, for $0\leq t\leq u\leq T$,
\begin{equation*}
    \frac{\psi_U(u,X_u^\varphi)}{\Xi_u^{U,\varphi}}\leq\frac{\psi_U(t,X_t^\varphi)}{\Xi_t^{U,\varphi}}.
\end{equation*}
Multiplying by $\Xi_u^{U,\varphi}>0$ and taking conditional expectations with respect to $\F_t^P$, we obtain
\begin{align*}
    \E\!\left[\psi_U(u,X_u^\varphi)\middle|\F_t^P\right]&\leq\psi_U(t,X_t^\varphi)\E\!\left[\frac{\Xi_u^{U,\varphi}}{\Xi_t^{U,\varphi}}\middle|\F_t^P\right]\\
    &=\psi_U(t,X_t^\varphi),
\end{align*}
since $\Xi^{U,\varphi}$ is a true martingale. If $\varphi=\varphi_U^*$, then $a_t^{U,\varphi_U^*}=0$ for all $t\in\T$, and therefore $Y^{U,\varphi_U^*}$ is constant. The same argument then gives equality.
\end{proof}


We are now ready to prove the verification theorem.

\begin{theorem}[Verification]\label{verification_theorem}
Fix one of the three utility specifications $U\in\{U_{\log},U_{\mathrm{pow}},U_{\exp}\}$. In the power utility case, assume that the Riccati system \eqref{power_reduced_ode} admits a $C^1$-solution on $\T$. Let $\psi_U$ denote the corresponding candidate value function constructed in Section 4, and let $\varphi_U^*$ be the associated candidate feedback control given by \eqref{log_candidate_strategy}, \eqref{power_candidate_strategy}, or \eqref{exponential_candidate_strategy}. Then
\begin{equation*}
    \psi_U(t,\cwealth,\cprice,\cdrift)=\val_U(t,\cwealth,\cprice,\cdrift):=\sup_{\varphi\in\mathcal A_U(\mathbb F^P)}\E\!\left[U(V_T^\varphi)\middle|X_t^\varphi=(\cwealth,\cprice,\cdrift)\right]
\end{equation*}
for all $(t,\cwealth,\cprice,\cdrift)\in\T\times\mathcal O$. In particular, $\varphi_U^*$ is optimal.
\end{theorem}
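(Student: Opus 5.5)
We prove the two inequalities $\psi_U\ge\val_U$ and $\psi_U\le\val_U$ separately, treating the logarithmic case apart from the power/exponential cases. Throughout, fix $(t,\cwealth,\cprice,\cdrift)$ and condition on $X_t^\varphi=(\cwealth,\cprice,\cdrift)$. By time-homogeneity of the Markov structure in \eqref{controlled_state_dynamics}, all earlier results stated on $[0,T]$ may be applied on $[t,T]$.

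\emph{Logarithmic case.} Let $\varphi\in\mathcal A_{\log}(\mathbb F^P)$ and apply Itô's formula to $\psi_{\log}(s,X_s^\varphi)$ on $[t,T]$. The HJB equation of Proposition \ref{log_hjb_reduction} gives
\begin{equation*}
    \partial_s\psi_{\log}+\mathfrak L\psi_{\log}(\;\Cdot\;,\varphi_s)\le0,
\end{equation*}
since $\varphi_{\log}^*$ attains the pointwise supremum. We stop at the localizing times $\tau_n$ of Lemma \ref{log_verification_limit_lemma}, restricted to $[t,T]$, and take conditional expectations. This yields
\begin{equation*}
    \E\!\left[\psi_{\log}(\tau_n,X_{\tau_n}^\varphi)\middle|X_t^\varphi=(\cwealth,\cprice,\cdrift)\right]\le\psi_{\log}(t,\cwealth,\cprice,\cdrift).
\end{equation*}
Letting $n\to\infty$ and using Lemma \ref{log_verification_limit_lemma} together with the terminal condition $\psi_{\log}(T,\cdot)=\log\cwealth$ gives $\E[\log V_T^\varphi\mid X_t^\varphi=\cdot\,]\le\psi_{\log}$. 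Taking the supremum over $\varphi$ gives $\val_{\log}\le\psi_{\log}$.

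For the reverse inequality, take $\varphi=\varphi^*_{\log}$, which is admissible by Proposition \ref{log_admissibility_prop}. The drift term then vanishes identically, and Lemma \ref{log_candidate_martingale_prop} shows that the remaining local martingale is a true martingale. Evaluating at $T$ gives equality, so $\psi_{\log}\le\val_{\log}$ and $\varphi^*_{\log}$ attains the supremum.

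\emph{Power and exponential cases.} Here Proposition \ref{pow_exp_supermartingale_prop} does the main work. For any $\varphi\in\mathcal A_U(\mathbb F^P)$, take $u=T$ in that proposition and use the terminal condition $\psi_U(T,\cwealth,\cdot)=U(\cwealth)$ from Propositions \ref{pow_hjb_reduction} and \ref{exp_hjb_reduction}. This gives
\begin{equation*}
    \E[U(V_T^\varphi)\mid\F_t^P]\le\psi_U(t,X_t^\varphi),
\end{equation*}
and hence, after passing to the regular conditional version, $\val_U\le\psi_U$. The candidate $\varphi_U^*$ is admissible by Propositions \ref{pow_admissibility_prop} and \ref{exp_admissibility}; for power utility this uses the assumed Riccati solution. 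The equality clause of Proposition \ref{pow_exp_supermartingale_prop} then gives $\E[U(V_T^{\varphi_U^*})\mid\cdot\,]=\psi_U$, which yields the reverse inequality and the optimality of $\varphi_U^*$.

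\emph{Main obstacle.} The delicate step is passing from statements under $\F_t^P$ with initial time $0$ to the conditional value at an arbitrary $(t,\cwealth,\cprice,\cdrift)$. Two issues must be handled.
\begin{itemize}
    \item The admissibility conditions, notably the true-martingale property of $\Xi$ and the $\sup|\log V|$ bound, must be checked for the process restarted at $t$ from a deterministic state. Lemma \ref{affine_exponential_martingale_lemma} covers this directly, since it allows arbitrary $x_0$ and the Gaussian moment bounds are uniform.
    \item The conditional expectations must be identified with Borel functions of the state. I would handle this via the Markov property of the linear SDE for $(P,\kdrift)$ with deterministic $K_t$, together with the fixed regular versions stipulated at the start of this subsection. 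The inequalities above then hold for almost every state, and they extend to every state because $\psi_U$ is continuous and the conditional expectations are continuous in the initial data, which follows from the explicit affine and Gaussian dependence on $(\cwealth,\cprice,\cdrift)$.
\end{itemize}
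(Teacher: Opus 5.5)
Your proposal is correct and follows the paper's proof essentially step for step: localization via Lemma~\ref{log_verification_limit_lemma}, the HJB inequality, and the true-martingale property from Lemma~\ref{log_candidate_martingale_prop} in the logarithmic case; Proposition~\ref{pow_exp_supermartingale_prop} at $u=T$ with its equality clause in the power and exponential cases; and then passage to regular conditional versions. Your closing remarks go beyond what the paper attempts and need two corrections: the controlled system is time-inhomogeneous rather than time-homogeneous, since $K_t$ varies (this is harmless, because the earlier results allow continuous time-dependent coefficients); and continuity of $\E[U(V_T^\varphi)\mid X_t^\varphi=\cdot\,]$ in the initial data does not follow for an arbitrary admissible $\varphi$, nor would continuity reach states outside the support of the law of $X_t^\varphi$, so the clean way to get the identity at every state is to restart the dynamics at time $t$ from the deterministic state, as in your first bullet, and rerun the same arguments on $[t,T]$.
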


\begin{proof}
We treat the logarithmic case and the power/exponential cases separately.

\medskip
\noindent\textbf{Case 1: logarithmic utility.}
Fix $(t,\cwealth,\cprice,\cdrift)\in\T\times\mathcal O$ and let $\varphi\in\mathcal A_{\log}(\mathbb F^P)$. Let $(\tau_n)_n$ be a localizing sequence as in Lemma \ref{log_verification_limit_lemma}. Applying It\^o's formula to the stopped process $\psi_{\log}(s\wedge\tau_n,X_{s\wedge\tau_n}^\varphi)$ on $[t,T]$ gives
\begin{align*}
    \E\!\left[\psi_{\log}(\tau_n,X_{\tau_n}^\varphi)\middle|X_t^\varphi=(\cwealth,\cprice,\cdrift)\right]&=\psi_{\log}(t,\cwealth,\cprice,\cdrift)\\
    &\quad+\E\!\left[\int_t^{\tau_n}\Bigl(\partial_s\psi_{\log}+\mathfrak L\psi_{\log}(\;\Cdot\;,\varphi_s)\Bigr)(s,X_s^\varphi)\dd s\middle|X_t^\varphi=(\cwealth,\cprice,\cdrift)\right].
\end{align*}
Since $\psi_{\log}$ satisfies the HJB equation, $\partial_s\psi_{\log}+\mathfrak L\psi_{\log}(\;\Cdot\;,\varphi_s)\leq0$. Therefore
\begin{equation*}
    \E\!\left[\psi_{\log}(\tau_n,X_{\tau_n}^\varphi)\middle|X_t^\varphi=(\cwealth,\cprice,\cdrift)\right]\leq\psi_{\log}(t,\cwealth,\cprice,\cdrift).
\end{equation*}
Passing to the limit by Lemma \ref{log_verification_limit_lemma}, we obtain
\begin{equation*}
    \E\!\left[\log V_T^\varphi\middle|X_t^\varphi=(\cwealth,\cprice,\cdrift)\right]\leq\psi_{\log}(t,\cwealth,\cprice,\cdrift).
\end{equation*}
Since $\varphi$ was arbitrary,
\begin{equation*}
    \val_{\log}(t,\cwealth,\cprice,\cdrift)
    \leq
    \psi_{\log}(t,\cwealth,\cprice,\cdrift).
\end{equation*}

Now choose $\varphi=\varphi_{\log}^*$. Since $\varphi_{\log}^*$ attains the pointwise supremum in the HJB equation,
\begin{equation*}
    \partial_s\psi_{\log}+\mathfrak L\psi_{\log}(\;\Cdot\;,\varphi_{\log,s}^*)=0.
\end{equation*}
Applying It\^o's formula on $[t,T]$, using Lemma \ref{log_candidate_martingale_prop}, and using the terminal condition $\psi_{\log}(T,\cwealth,\cprice,\cdrift)=\log\cwealth$,
we obtain
\begin{equation*}
    \E\!\left[\log V_T^{\varphi_{\log}^*}\middle|X_t^{\varphi_{\log}^*}=(\cwealth,\cprice,\cdrift)\right]=\psi_{\log}(t,\cwealth,\cprice,\cdrift).
\end{equation*}
Thus
\begin{equation*}
    \psi_{\log}(t,\cwealth,\cprice,\cdrift)\leq\val_{\log}(t,\cwealth,\cprice,\cdrift).
\end{equation*}
Combining the two inequalities proves $\psi_{\log}(t,\cwealth,\cprice,\cdrift)=\val_{\log}(t,\cwealth,\cprice,\cdrift)$.

\medskip
\noindent\textbf{Case 2: power and exponential utility.}
Fix $U\in\{U_{\mathrm{pow}},U_{\exp}\}$, $(t,\cwealth,\cprice,\cdrift)\in\T\times\mathcal O$, and $\varphi\in\mathcal A_U(\mathbb F^P)$. Since $\psi_U(T,\cwealth,\cprice,\cdrift)=U(\cwealth),$ Proposition \ref{pow_exp_supermartingale_prop} with $u=T$ gives
\begin{equation*}
    \E\!\left[U(V_T^\varphi)\middle|\F_t^P\right]=\E\!\left[\psi_U(T,X_T^\varphi)\middle|\F_t^P\right]\leq\psi_U(t,X_t^\varphi)\qquad\text{a.s.}
\end{equation*}
Taking conditional expectations with respect to $\sigma(X_t^\varphi)$, and using that $\psi_U(t,X_t^\varphi)$ is $\sigma(X_t^\varphi)$-measurable, yields
\begin{equation*}
    \E\!\left[U(V_T^\varphi)\middle|\sigma(X_t^\varphi)\right]\leq\psi_U(t,X_t^\varphi)\qquad\text{a.s.}
\end{equation*}
Passing to the corresponding regular conditional expectations, we obtain
\begin{equation*}
    \E\!\left[U(V_T^\varphi)\middle|X_t^\varphi=(\cwealth,\cprice,\cdrift)\right]\leq\psi_U(t,\cwealth,\cprice,\cdrift).
\end{equation*}
Since $\varphi$ was arbitrary,
\begin{equation*}
    \val_U(t,\cwealth,\cprice,\cdrift)\leq\psi_U(t,\cwealth,\cprice,\cdrift).
\end{equation*}
Now choose $\varphi=\varphi_U^*$. Proposition \ref{pow_exp_supermartingale_prop} gives
\begin{equation*}
    \E\!\left[U(V_T^{\varphi_U^*})\middle|\F_t^P\right]=\psi_U(t,X_t^{\varphi_U^*})\qquad\text{a.s.}
\end{equation*}
Conditioning again with respect to $\sigma(X_t^{\varphi_U^*})$, we obtain
\begin{equation*}
    \E\!\left[U(V_T^{\varphi_U^*})\middle|\sigma(X_t^{\varphi_U^*})\right]=\psi_U(t,X_t^{\varphi_U^*})\qquad\text{a.s.}
\end{equation*}
Thus the corresponding regular conditional expectations satisfy
\begin{equation*}
    \E\!\left[U(V_T^{\varphi_U^*})\middle|X_t^{\varphi_U^*}=(\cwealth,\cprice,\cdrift)\right]=\psi_U(t,\cwealth,\cprice,\cdrift).
\end{equation*}
Therefore
\begin{equation*}
    \psi_U(t,\cwealth,\cprice,\cdrift)\leq\val_U(t,\cwealth,\cprice,\cdrift).
\end{equation*}
Combining the two inequalities gives $\psi_U(t,\cwealth,\cprice,\cdrift)=\val_U(t,\cwealth,\cprice,\cdrift)$.
\end{proof}

%
%

\section{Conclusion}

We have studied a class of partial-information portfolio optimization problems in which the drift of a risky asset is driven by two latent stochastic factors evolving at distinct time scales. Because the investor observes only prices, the problem is naturally formulated under the price filtration and reduced, via Kalman-Bucy filtering, to a fully observed stochastic control problem in the filtered state variables.

\medskip
Our main result is that, in this two-factor linear-Gaussian setting, a MACD-type signal arises endogenously from the combined filtering and control problem. More precisely, the filtered estimate of the latent mean-reversion level admits a representation in terms of a fast--slow exponential divergence of the observed price path, together with a deterministic Volterra correction term. This identifies the fast--slow exponential divergence signal as the fundamental path-dependent quantity through which past observed prices enter the filtered drift estimate.

\medskip
We then derived explicit candidate value functions and feedback controls for logarithmic, power, and exponential utility. In each case, the optimal feedback law depends on the filtered state and current price through the same quantity
\begin{equation*}
    m(\cprice,\cfast)=\lambda_p\cfast-\kappa_p\cprice,
\end{equation*}
and hence through the same MACD-type filtered signal. Thus the MACD-type structure is not imposed exogenously as a restricted trading rule, but emerges from optimization over admissible price-adapted strategies.

\medskip
Finally, we established admissibility of the candidate controls and proved a verification theorem for all finite time horizons in the logarithmic and exponential cases, and for power utility on horizons for which the Riccati system \eqref{power_reduced_ode} admits a solution on $\T$.

\medskip
More broadly, the results show how a classical signal from technical analysis can arise from a fully specified continuous-time model of learning and decision-making under uncertainty. In this sense, the paper provides a rigorous bridge between partial-information stochastic control and signal-based trading rules used in practice.

\newpage
\nocite{*}
\printbibliography

\end{document}